\keywords{Spatio-Temporal Logic, Monitoring, Cyber-Physical Systems}
\theoremstyle{plain} 
\theoremstyle{definition}
\newtheorem{definition}[thm]{Definition}
\newtheorem{example}[thm]{Example}
\theoremstyle{plain}
\newtheorem{proposition}[thm]{Proposition}
\newtheorem{lemma}[thm]{Lemma}
\newtheorem{theorem}[thm]{Theorem}
\theoremstyle{definition}
\newcommand{\wfun}{\mathbf{W}}
\newcommand{\nextto}[3]{#1\stackrel{#2}{\mapsto}#3}
\newcommand{\route}{\tau}
\newcommand{\tsign}{\nu}
\newcommand{\pct}{\tilde{\tsign}}
\newcommand{\ssign}{\mathbf{s}}
\newcommand{\sts}{\sigma}
\newcommand{\pcsts}{\tilde{\sigma}}
\newcommand{\lserv}{\mathcal{S}}
\newcommand{\fmon}{\mathbf{m}}
\newcommand{\ev}[1]{\mathrm{F}_{#1}}
\newcommand{\glob}[1]{\mathrm{G}_{#1}}
\newcommand{\until}[1]{\mathrm{U}_{#1}}
\newcommand{\since}[1]{\mathrm{S}_{#1}}
\newcommand{\somewhere}[2]{\diamonddiamond_{#1}^{#2}}
\newcommand{\everywhere}[2]{\boxbox_{#1}^{#2}}
\newcommand{\surround}[2]{\circledcirc_{#1}^{#2}}
\newcommand{\reach}[2]{\mathcal{R}_{#1}^{#2}}
\newcommand{\escape}[2]{\mathcal{E}_{#1}^{#2}}
\newcommand{\coord}{ {\color{\green} coord } }
\newcommand{\router}{ {\color{\rouge} router} } 
\newcommand{\device}{ {\color{\blu} end\_dev } }
\definecolor{lavander}{cmyk}{0,0.48,0,0}
\definecolor{violet}{cmyk}{0.79,0.88,0,0}
\definecolor{burntorange}{cmyk}{0,0.52,1,0}
\def\blu{blue!45!black!70!}
\def\green{green!45!black!70!}
\def\rouge{red!50!black!70!}
\newcommand{\wh}[1]{\color{white}{#1}}
\begin{document}

\title[A Logic for Monitoring Dynamic Networks of Spatially-distributed CPS]{A Logic for Monitoring Dynamic Networks of Spatially-distributed Cyber-Physical Systems\rsuper*}
\titlecomment{{\lsuper*} This paper is an extended version of the "Monitoring mobile and spatially distributed cyber-physical systems'' manuscript published at MEMOCODE 2017~\cite{Bartocci17memocode}.}

\author[L.~Nenzi]{Laura Nenzi}[a,b]	
\address{University of Trieste, Italy}	
\email{lnenzi@units.it}  
\email{luca@dmi.units.it}

\author[E.~Bartocci]{Ezio Bartocci}[b]
\address{TU Wien, Austria}
\email{ezio.bartocci@tuwien.ac.at}

\author[L.~Bortolussi]{Luca Bortolussi}[a]

\author[M.~Loreti]{Michele Loreti}[c]	
\address{University of Camerino, Italy}	
\email{michele.loreti@unicam.it}  



 

\begin{abstract}
Cyber-Physical Systems (CPS) consist of inter-wined computational (cyber) and physical components interacting through sensors and/or actuators.  Computational elements are networked at every scale and can communicate with each other and with humans. Nodes can join and leave the network at any time or they can move to different spatial locations. In this scenario, monitoring spatial and temporal properties plays a key role in the understanding of how complex behaviors can emerge from local and dynamic interactions. 
We revisit here the Spatio-Temporal Reach and Escape Logic (STREL), a logic-based formal language designed to express and monitor spatio-temporal requirements over the execution of mobile and spatially distributed CPS.  STREL considers the physical space in which CPS entities (nodes of the graph) are arranged as a weighted graph representing their dynamic topological configuration. Both nodes and edges include attributes modeling physical and logical quantities that can evolve over time.
STREL combines the Signal Temporal Logic with two spatial modalities \emph{reach} and \emph{escape} that operate over the weighted graph. From these basic operators, we can derive other important spatial modalities such as \emph{everywhere}, \emph{somewhere} and \emph{surround}. We propose both qualitative and quantitative semantics based on constraint semiring algebraic structure.  We provide an offline monitoring algorithm for STREL and we show the feasibility of our approach with the application to two case studies: monitoring spatio-temporal requirements over a simulated mobile ad-hoc sensor network and a simulated epidemic spreading model for COVID19.
\end{abstract}



\maketitle
\section{Introduction}
From contact tracing devices preventing the epidemic spread to vehicular networks and smart cities, \emph{Cyber-Physical Systems} (CPS) are pervasive information and communication technologies augmenting the human perception and control over the physical world.  CPS consist of engineering, physical and biological systems that are tightly interacting with computational entities through sensors and actuators. CPS are networked at every scale and they are connected to the Internet (Internet of Things) enabling humans and other software systems to inter-operate with them through the World Wide Web.

A prominent example of CPS is present in the modern automotive systems where the substantial embedding of sensors, actuators, and computational units has facilitated the development of various driving assistance features such as the adaptive cruise control or the collision avoidance system. Furthermore, the upcoming 5G networks will empower soon vehicles also with the possibility to access real-time information about each other (position and speed of vehicles) and the condition of the other roads (accidents or traffic jams).
Thus, this dynamic network infrastructure promises to enhance further autonomous driving applications, reduce traffic congestion and improve safety.  

The benefits brought by this  increasingly pervasive technology  have also a price: unexpected failures can potentially manifest causing fatal accidents. Due to their safety-critical nature~\cite{RatasichKGGSB19}, engineers must verify that their behavior is correct with respect to rigorously defined requirements. However, given the uncertainty of the environment in which these systems operate, detecting all the failures at design time is usually unfeasible. Exhaustive verification techniques such as model checking are limited to very small instances due to state-space explosion.  An alternative  approach is to simulate a digital replica of the CPS (its digital twin) and to test the behavior under different scenarios. 
Requirements are generally  expressed in a formal specification language that can be monitored online (during the simulation) or offline over the simulation traces.
Most of the available specification languages~\cite{MalerN13,maler2016runtime,BartocciDGMNQ20,AsarinCM02}   can specify only temporal properties.  However, spatio-temporal patterns play a key role in the understanding of how emergent behaviors can arise from local interactions in such complex systems of systems.
Thus, an important problem is then how to specify in a formal language spatio-temporal requirements, and how to efficiently monitor them on the actual CPS or on the simulation of its  digital twin.

In this paper, we propose the Spatio-Temporal Reach and Escape Logic (STREL), a spatio-temporal specification language originally introduced in~\cite{Bartocci17memocode} and that we further revisit and extend in this manuscript.  STREL allows to specify spatio-temporal requirements and their monitoring over the execution of mobile and spatially distributed entities  and it is supported by the \textsc{MoonLight}~\cite{BartocciBLNS20} monitoring tool.

STREL considers the space, where the single components are spatially arranged, as a weighted graph representing the dynamical topological configuration.  Both nodes and edges include attributes modeling the physical and logical quantities that can evolve in time.
STREL combines the Signal Temporal Logic~\cite{Maler2004} with two spatial operators \emph{reach} and \emph{escape}. Other spatial operators such as \emph{everywhere}, \emph{somewhere} and \emph{surround} can be derived from them.
 The use of \emph{reach} and \emph{escape} opens the possibility to monitor the STREL property locally 
 at each node by observing the value of satisfaction of its neighbors.
The verdict of a STREL monitor can be 
 evaluated using different semantics (Boolean, real-valued)  based on the constraint semirings algebraic structures.

The use of STREL is by no means restricted to CPS as an application domain, but it is capable of capturing many interesting notions in other spatio-temporal systems, including biological systems (e.g. Turing patterns~\cite{BartocciBMNS15,Bartocci2016TNCS,NenziBCLM18}), epidemic spreading scenarios (in real space or on networks)~\cite{network_epidemic2015}, or ecological systems. In these cases, monitoring algorithms typically act as key subroutines in statistical model checking~\cite{BartocciBMNS15}.

This paper extends our preliminary work~\cite{Bartocci17memocode} as follows:
\begin{itemize}
\item we guide the reader through the all paper using a running example to facilitate the comprehension of our framework in each step;
\item we simplify the definition of dynamical spatial model and of the spatio-temporal semantics;
\item we extend spatial STREL operators to support interval constraints on distances;
\item we propose new monitoring algorithms, more efficient and able to work with interval constraints. We also provide correctness proofs and discuss in detail its algorithmic complexity; 
\item we consider a second case study where we monitor 
spatio-temporal requirements in STREL on a simulated epidemic spreading model for COVID19; 
\end{itemize}

The rest of the paper is organized as follows. We discuss 
the related work in Section~\ref{sec:related}. In Section~\ref{sec:runningExample} we introduce the reader with a running example while in Section~\ref{sec:definitions} we present the considered model of space and the type of signals. The STREL specification language is presented in Section~\ref{sec:ReachSTL} and the offline monitoring algorithms are discussed in Section ~\ref{sec:alg}. In Section~\ref{sec:ZigBee} and~\ref{sec:epidemic} we discuss the two case studies: the ZigBee protocol and the COVID-19 epidemic spreading. Finally, Section~\ref{sec:conclusion} presents the conclusions.

\section{Related Work}
\label{sec:related}

The study of spatial logics dates back to at least twenty years ago.
A first systematic study of spatial logics was proposed in a dedicated handbook~\cite{handbookSP}. These works discuss several important theoretical problems such as expressivity and decidability. However, the lack of verification procedures
and available tools made these specification
languages less practical to use.
More recently, Ciancia and others proposed spatial~\cite{ciancia2014} and spatial-temporal~\cite{CianciaGGLLM15}  model checking algorithms to verify the {\it Spatial Logic for Closure Spaces}~\cite{ciancia2014} (SLCS)
and its later extension~\cite{CianciaGGLLM15} with the temporal modality of the {\it  Computation Tree Logic}~\cite{EmersonH83}.
SLCS considers a discrete and topological notion of space that is based on closure spaces~\cite{Gal99}.
Another example of the spatial model checker is \textsc{VoxLogicA}~\cite{BelmonteCLM19,BuonamiciBCLM20}, a specialized tool designed for image analysis. \textsc{VoxLogicA} does not consider time and due to its specialization to image analysis, it cannot be employed 
to verify spatio-temporal properties
over cyber-physical systems.
Spatial-temporal model checking algorithms are also very computationally expensive due to the state-space explosion that is further exacerbated 
by the spatial domain. Here, we 
consider instead a more lightweight verification technique that consists in monitoring spatial properties over a single execution trace of the system. This approach 
can be applied both on a simulated model of the system or on the system
itself by properly instrumenting it and collecting its execution traces.

In the field of runtime verification  several logical frameworks for monitoring spatial-temporal properties~\cite{NenziBBLV20} are recently gaining momentum: {\it Spatial-Temporal Logic} (SpaTeL)~\cite{spatel}, {\it Signal Spatio-Temporal Logic} (SSTL)~\cite{NenziBCLM15}, {\it Spatial Aggregation Signal Temporal Logic}
(SaSTL)~\cite{MaBLS020,sastl2021} and {\it Spatio-Temporal  Reach and Escape Logic} (STREL)~\cite{Bartocci17memocode}. 
SpaTeL combines the 
{\it Signal Temporal Logic}~\cite{Maler2004} (STL) with the {\it Tree Spatial Superposition Logic}~\cite{bartocci2014} (TSSL).
Spatial patterns in TSSL are specified in terms of properties over quad trees~\cite{FinkelB74} spatial  data structures. 
Specifications in this logic can capture even complex fractal spatial patterns. However, finding the right formulation manually can be extremely cumbersome. 

SSTL~\cite{NenziBCLM15,NenziBCLM18} extends the STL temporal logic with two spatial modalities \emph{somewhere} and \emph{surrounded} that operate over a weighted graph where nodes represent spatial locations while edges their topological relations. The \emph{somewhere} operator specifies that a property is true in at least one of the nodes nearby, while the \emph{surround} operator specifies the property of being surrounded by a region of nodes satisfying its nested formula. SSTL can be interpreted using both  Boolean and real-valued semantics. One limitation of SSTL is that the topological relation of the nodes is fixed: the nodes cannot change their position.  STREL~\cite{Bartocci17memocode} overcomes this limitation by operating over a dynamic topological space. STREL introduces also two new spatial operators (\emph{reach} and \emph{escape}) that generalize and substitute the SSTL operators 
and it simplifies the monitoring procedure that is now computed locally in each node.    

SaSTL~\cite{MaBLS020,sastl2021} is yet another spatio-temporal monitoring language recently developed to monitor the internet of things of smart cities.  SaSTL is equipped with new logical 
spatial operators to express spatial aggregation and spatial counting and similarly to SSTL the monitoring procedure is limited to only a static topological space.


Another important key characteristic of STREL with respect to all the aforementioned spatio-temporal specification languages is the possibility to define the semantics using constraint semirings algebraic structures. This provides the possibility to elegantly define a unified monitoring framework for both the qualitative and quantitative semantics (similarly to~\cite{JaksicBGN18} for the STL case). 
Finally, it is also worth mentioning several recent applications of STREL ranging from mining spatio-temporal requirements from data~\cite{MohammadinejadD21} to synthesizing neural network-based controllers for multi-agent systems~\cite{strelnewsemantics}.

\section{Running Example: A Mobile ad hoc sensor network}
\label{sec:runningExample} 
A mobile ad-hoc sensor network (MANET) can consist of  up ten thousand mobile devices connected wirelessly, usually deployed to monitor environmental changes  such as pollution, humidity, light, and temperature.
Each sensor node is equipped with a sensing transducer, data processor, radio transceiver,  and an embedded battery.  A node can move independently in any direction and indeed can change its links to other devices frequently. 
Two nodes can communicate with each other if  their Euclidean distance is at most their  communication range as depicted in Figure~\ref{fig:proxconnect}~(right).
Moreover, the nodes can be of different type and their behavior and communication can depend on their types. In the next section, we consider the simplest MANET with all nodes of the same type, while in Section~\ref{sec:ReachSTL} we will differentiate them to describe more complex behaviors. 
\begin{figure}[ht]
    \centering
    \includegraphics[scale=0.5]{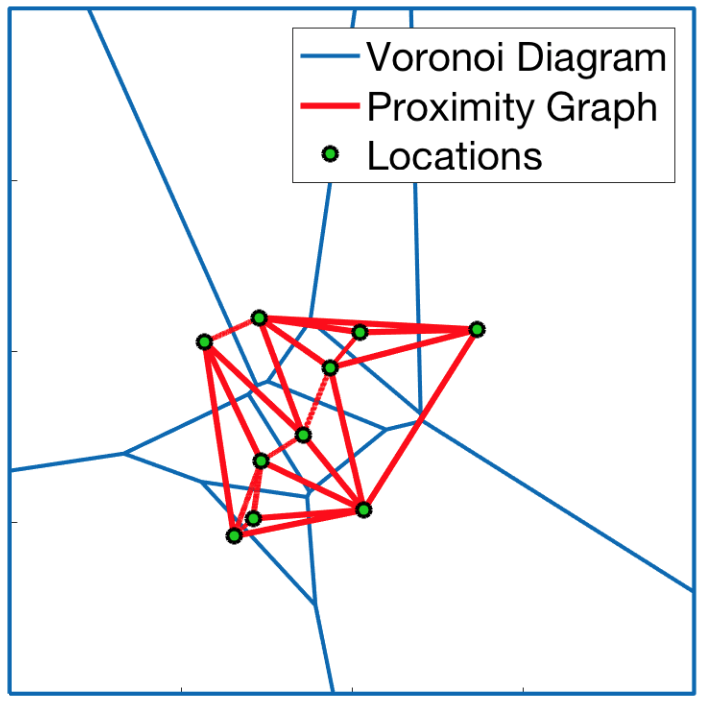}
    \includegraphics[scale=0.5]{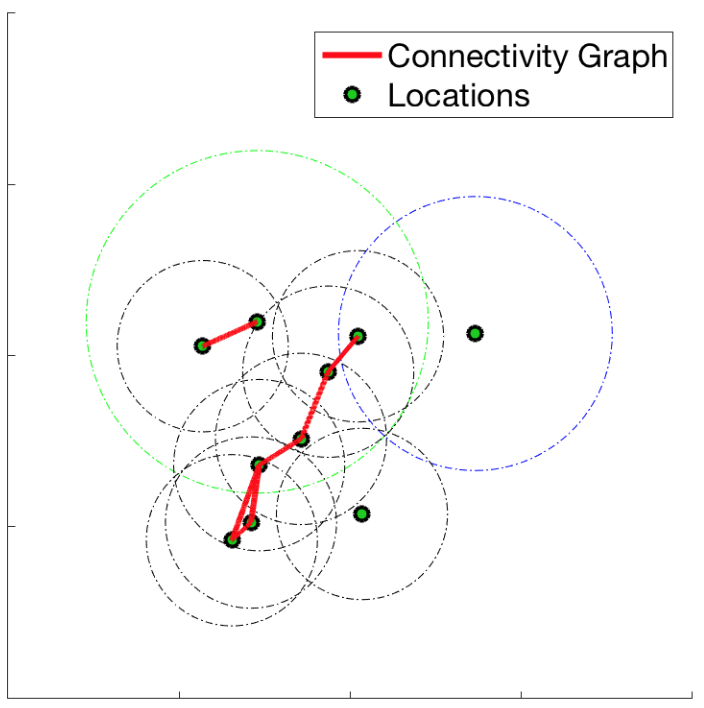}
    \caption{Proximity graph (left) and Connectivity graph (right)}
    \label{fig:proxconnect}
\end{figure}

\section{Spatial Models, Signals and Traces}
\label{sec:definitions}
In this section, we introduce the model of space we consider, 
and the type of signals that the logic specifies.

\subsection{Constraint Semirings}
An elegant and general way to represent the result of monitoring is based on \emph{constraint semiring}~\cite{BMR97}. This is an algebraic structure that consists 
of a domain and two operations named \emph{choose} and \emph{combine}. 
Constraint semirings are a subclass of semirings that have been shown 
to be very flexible, expressive, and convenient for a wide range of problems, 
in particular for optimization and solving problems with soft constraints 
and multiple criteria~\cite{BMR97}, and in model checking~{\protect{\cite{LM05}}}.

\begin{definition}[Semiring] 
A \emph{semiring} is a tuple $\langle A, \oplus, \otimes, \bot, \top \rangle$ composed by 
a set $A$, two operators $\oplus$, $\otimes$ and two constants $\bot$, 
$\top \in A$ such that: 
\begin{itemize}
\item $\oplus : A \times A \rightarrow A$ is an associative, commutative operator to ``choose'' among values\footnote{We let 
$x\oplus y$ to denote $\oplus(\{ x , y\})$.},
with $\bot$ as unit element ($\bot \oplus a = a,   \forall a \in A$).
\item $\otimes : A \times A \rightarrow A$ is an associative operator  to ``combine'' values with $\top$ as unit element ( $\top \otimes a = a,   \forall a \in A$) and $\bot$ as absorbing element ($\bot \otimes a = \bot,   \forall a \in A$ )
\item $\otimes$ distributes over $\oplus$;
\end{itemize}
\end{definition}

\begin{definition}[c-semiring]
A \emph{constraint semiring} (c-semiring)
is a semiring $\langle A, \oplus, \otimes, \bot, \top \rangle$ such that: 
\begin{itemize}
\item $\oplus$ is defined over $2^A$, idempotent ( $a\in A$ $a\oplus a=a\otimes a =a$) and has $\top$ as absorbing element ( $\top \oplus a = \top$)
\item $\otimes$  is commutative
\item $\sqsubseteq$, which is defined as $a\sqsubseteq b$ iff {$a\oplus b=b$}, 
provides a complete lattice $\langle A , \sqsubseteq , \bot, \top \rangle$. 
\end{itemize}
We say that a \emph{semiring}
$A$ is \emph{total} when $\sqsubseteq$ is a {\emph{total order}}. 
\end{definition}

With abuse of notation, we sometimes refer to a semiring 
$\langle A, \oplus,\otimes , \bot, \top  \rangle$ with the carrier $A$ 
and to its components by subscribing them with the carrier, i.e., 
$\oplus_A$, $\otimes_A$, $\bot_A$ and $\top_A$.  For the sake of a lighter notation, we drop the subscripts when clear from the context.

\begin{example}\label{ex:semirings}
Typical examples of c-semirings are\footnote{We use $\mathbb{R}^{\infty}$ (resp. $\mathbb{N}^{\infty}$) to denote $\mathbb{R}\cup\{-\infty,+\infty\}$ (resp. $\mathbb{N}\cup\{\infty\}$).}:
\begin{itemize}
%
\item the Boolean semiring $\langle  \{\mathit{true},\mathit{false}\}, \vee, \wedge, \mathit{false}, \mathit{true} \rangle$; 
\item the tropical semiring $\langle \mathbb{R}_{\geq 0}^{\infty},\emph{min},+,+\infty,0 \rangle$;
\item the max/min semiring: $\langle \mathbb{R}^{\infty}, \emph{max},\emph{min}, -\infty, +\infty \rangle$ ;
\item the integer semiring: $\langle \mathbb{N}^{\infty}, \emph{max},\emph{min}, 0, +\infty \rangle$.
%
%
\end{itemize}
All the above semirings are \emph{total}. 

\end{example}

One of the advantages of \emph{semirings} is that these can be easily composed. For instance, if $A$ and $B$ are two semirings, one can consider the \emph{cartesian product} $\langle A\times B,(\bot_A,\bot_B), (\top_A,\top_B), \oplus,\otimes\rangle$ where operations are applied elementwise.

\subsection{Spatial model}
Space is represented via a graph with edges having a weight from a set $A$.
We consider directed graphs (undirected graphs can consider symmetric relation).
\begin{definition}[$A-$spatial model] 
    An $A-$\emph{spatial model} $\mathcal{S}$ is a pair $\langle L, \wfun\rangle$ where:
    \begin{itemize}
        \item $L$ is a set of \emph{locations}, also named \emph{space universe};
        \item $\wfun\subseteq L\times A\times L$ is a \emph{proximity function} associating at most one label $w \in A$ with each distinct pair $\ell_1,\ell_2\in L$. 
    \end{itemize} 
\end{definition}

We will use $\mathbb{S}_{A}$ to denote the set of $A$-\emph{spatial models}, while $\mathbb{S}^{L}_{A}$ indicates the set of $A$-\emph{spatial models} having $L$ as a set of locations. In the following, we will equivalently write  $(\ell_1,w,\ell_2)\in \wfun$ as $\wfun(\ell_1,\ell_2)=w$ or  $\nextto{\ell_1}{w}{\ell_2}$, saying that $\ell_1$ is \emph{next to} $\ell_2$ with weight $w \in A$.


\begin{example}  $\mathbb{R}_{\geq 0}^{\infty}$-spatial model 
can be used to represent standard {\it weighed graphs} as Figure~\ref{fig:spmodel}. $L$ is the set of nodes and the proximity function $\wfun$ defines the weight of the edges, e.g. $\wfun(\ell_2,\ell_7)= \wfun(\ell_7,\ell_2) =5$.
{\small
 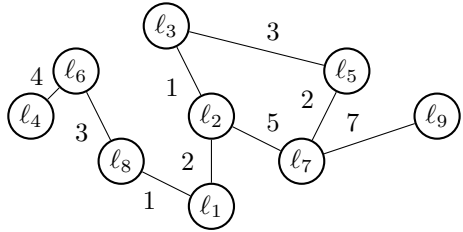
\begin{figure}[H]
\begin{center}
\begin{tikzpicture}
  [scale=.6,auto=left,every node/.style={circle,thick,inner sep=0pt,minimum size=6mm}]
  \node (1) [draw = black] at (-1,-1) {$\ell_1$};
  \node (2) [draw = black] at ( -1,1) {$\ell_2$};
  \node (3) [draw = black] at ( -2,3) {$\ell_3$};
  \node (4) [draw = black] at ( -5, 1){$\ell_4$};
  \node (5) [draw = black] at ( 2, 2) {$\ell_5$};
  \node (6) [draw = black] at (-4, 2) {$\ell_6$};
  \node (7) [draw = black] at (1, 0) {$\ell_7$};
  \node (8) [draw = black] at (-3,0) {$\ell_8$};
  \node (9)[draw = black] at (4,1) {$\ell_9$};

 \draw [-] (1) -- (2) node[midway] {2};
 \draw [-] (1) -- (8) node[midway] {1};
 \draw [-] (2) -- (7) node[midway] {5};
 \draw [-] (2) -- (3) node[midway] {1};
 \draw [-] (4) -- (6) node[midway] {4};
 \draw [-] (8) -- (6) node[midway] {3};
 \draw [-] (7) -- (9) node[midway] {7};
 \draw [-] (7) -- (5) node[midway] {2};
 \draw [-] (3) -- (5) node[midway] {3};

 \end{tikzpicture}
 \end{center}
     \caption{Example of a weighted undirected graph; e.g. $\wfun(\ell_2,\ell_7)= \wfun(\ell_7,\ell_2) =5$. }
 \label{fig:spmodel}
 \end{figure}
 }
\end{example}

A special class of spatial models is the ones based on \emph{Euclidean spaces}. 

\begin{definition}[Euclidean spatial model] 
\label{def:Euclidean}
Let $L$ be a set of locations, $R\subseteq L\times L$ a (reflexive) relation and $\lambda: L\rightarrow \mathbb{R}^{2}$ a function mapping each location to a point in  $\mathbb{R}^{2}$, we let $\mathcal{E}(L,R,\lambda)$ be the $\mathbb{R}^{\infty}\times\mathbb{R}^{\infty}$-spatial model\footnote{$\mathbb{R}^{\infty}$ is the \emph{max/min} semiring considered in Example~\ref{ex:semirings}.} $\langle L, \wfun^{\lambda, R}\rangle$ such that:
\begin{displaymath}
\wfun^{\lambda,R}=\{ (\ell_1,\lambda(\ell_1)-\lambda(\ell_2),\ell_2) | (\ell_1,\ell_2)\in R \}
\end{displaymath}
\label{def:euclisomod}
\end{definition}
Note that we label edges with a 2-dimensional vector $w$ describing how to reach $\ell_2$ from $\ell_1$, i.e.,  $\lambda(\ell_1) - w = \lambda(\ell_2)$. This obviously allows us to compute the Euclidean distance between  $\ell_1$ and $\ell_2$ as $\| w \|_2$, but, as we will see, allows us to compute the Euclidean distance of any pair of locations connected by any path, not necessarily by a line in the plane.
Note that, in the general case, a Euclidean spatial model is not a \emph{planar graph}.
A planar graph is obtained when, for instance, one considers the \emph{proximity graph} of Figure~\ref{fig:proxconnect}.

%
%
%
%
%
%
%
%
%

\begin{example}
\label{ex:manet} 
When considering a MANET, we can easily define different proximity functions for the same set of locations, where each location represents a mobile device.
Given a set of $n$ reference points in a two-dimensional Euclidean plane, a Voronoi diagram~\cite{Aurenhammer1991}  partitions the plane into a set of $n$ regions, one per reference point,  
assigning each point of the plane to the region corresponding to the closest reference point. 
The dual of the  Voronoi diagram is the proximity graph or Delaunay triangulation~\cite{Delaunay1934}.  
In Figure~\ref{fig:proxconnect} (left) we can see an example of the Voronoi diagram (in  blue) and proximity graph (in red). 
The proximity function can then be defined with respect to the Cartesian coordinates, as in Definition~\ref{def:euclisomod}: 
\begin{math}
\wfun^{\mu,R}(\ell_i,\ell_j)=\mu(\ell_i)-\mu(\ell_j)=(x_i,y_i)-(x_j,y_j)= (x_i - x_j , y_i -y_j)
\end{math}, where 
$(x_i,y_i)$ are the plane coordinates of the location $\ell_i$.

The proximity function can be also equal to a value that depends on other specific characteristics or behaviors of our nodes.  For instance, Figure~\ref{fig:proxconnect}~(right) represents the connectivity graph of MANET. In this case, a location $\ell_i$ is next to a location $\ell_j$ if and only if they are within their communication range.
\end{example}

Given an $A$-spatial model we can define \emph{routes}. 

\begin{definition}[Route]
Let $\mathcal{S}=\langle L,\wfun\rangle$, a \emph{route} $\route$ 
is an infinite sequence $\ell_0 \ell_1\cdots \ell_k\cdots$ in $L^{\omega}$ such that for any $i\geq 0$, $\nextto{\ell_i}{d}{\ell_{i+1}}$.
\end{definition}
Let $\route=\ell_0 \ell_1\cdots \ell_k\cdots$ be a route, $i\in \mathbb{N}$ and $\ell_i \in L$, we use:
\begin{itemize}
\item $\route[i]$ to denote the $i$-th node $\ell_i$ in $\route$;
\item $\route[i..]$ to indicate the suffix route $\ell_i \ell_{i+1} \cdots$;
\item $\ell \in \route$ when there exists an index $i$ such that $\route[i]=\ell$, while we use $\ell\not\in \route$ if this index does not exist;
\item $\route(\ell)$ to denote the first occurrence of $\ell$ in $\tau$:
\[
\route(\ell)=\left\{
\begin{array}{ll}
\min\{ i | \route[i]=\ell \} & \mbox{if $\ell\in \route$}\\
\infty & \mbox{otherwise} \\ 	
\end{array}
\right.
\]
\end{itemize}
We also use $Routes(\mathcal{S})$ to denote the set of routes in $\mathcal{S}$, while $Routes(\mathcal{S},\ell)$ denotes the set of routes starting from $\ell \in L$.

We can use routes to define the \emph{distance} among two locations in a \emph{spatial model}. This distance is computed via an appropriate function $f$ that combines all the weights in a route into a value taken from an appropriate \emph{total ordered} monoid $B$. 

\begin{definition}[Distance Domain]
\label{def:distDom}
We say that  \emph{distance domain} $(B,\bot_B,\top_B,+_{B},\leq_{B})$ whenever $\leq_{B}$ is a total order relation over $B$ where $\bot_{B}$ is the minimum while $\top_{B}$ is the maximum and $(B,\bot_B,+_{B})$ is a monoid. Given a distance domain $B$, we use $\bot_{B}$, $\top_{B}$, $+_B$ and $\leq_B$ to denote its elements.
\end{definition}

\begin{definition}[Distance Function and Distance over paths]
    Let  $\mathcal{S}=\langle L,\wfun\rangle$ be an $A$-spatial model, $\route$ a route in $\mathcal{S}$, $\langle B,\bot_{B},\top_{B},+_{B},\leq_{B} \rangle$ a \emph{distance domain},  we call $f:A\rightarrow B$ the \emph{distance function}, associating elements of $A$ to the distance domain $B$. The distance $d_{\route}^{f}[i]$ up-to index $i\in \mathbb{N}^{\infty}$ is defined as follows:
	\[
	d_{\route}^{f}[i]= \begin{cases}
      \bot_{B} &   i=0 \\
      \infty & i=\infty \\
      f(w) +_{B} d_{\route[1..]}^{f}[i-1] & (i>0)\mbox{ and } \nextto{\route[0]}{w}{\route[1]} 
\end{cases} \\	
	\]
\noindent	
Given a locations $\ell\in L$, the distance over $\route$ up-to $\ell$ is then $d_{\route}^{f}[\ell]  = d_{\route}^{f}[\route(\ell)]$ if $\ell\in \route$, while it is $\top_{B}$ if $\ell\not\in \route$.
%
\end{definition}


\begin{example}
\label{ex:distancefunction}
Considering again  a MANET, one could be interested in different types of distances, e.g., 
\emph{counting} the number of \emph{hop},  or distances induced by the weights of the Euclidean space structure.
%

\noindent
To count the number of hop, we can simply use the  function 
$hop: A \rightarrow \mathbb{N}^{\infty}$, taking values in the distance domain $\langle \mathbb{N}^{\infty}, 0, \infty, +, \leq \rangle$:
\[
hop(w)=1
\]   
and in this case $d^{hop}_\tau[i]=i$.

Considering the proximity function $\wfun^{\mu,R}(\ell_i,\ell_j)$ computed from the Cartesian coordinates and  the distance domain $\langle \mathbb{R}^{\infty}, 0, \infty, +, \leq \rangle$, we can use the Euclidean distance $\Delta(x,y)=  \| (x,y) \|$, where $(x,y)$ are the  coordinates of the vectors returned by $\wfun^{\mu,R}$. 

It is easy to see that for any route $\route$  and for any  location $\ell \in L$ in $\route$, the function $d_{\route}^{\Delta}(\ell)$ yields the sum of lengths of the edges in $\mathbb{R}^{2}$ connecting $\ell $ to $\route(0)$.

 
%
%

Given a distance function $f: A\rightarrow B$, the distance between two locations $\ell_1$ and $\ell_2$ in a $A$-spatial model is obtained by choosing the minimum distance along all possible routes starting from $\ell_1$ and ending in $\ell_2$:
\[d^{f}_{\mathcal{S}}[\ell_1,\ell_2] = \min\left\{ d^{f}_{\route}[\ell_2] | \route\in Routes(\mathcal{S},\ell_1) \right\}.
\]

\begin{example}
\label{ex:distanceLocations}
Consider again the distance functions defined for a MANETS. For \emph{hop}, we are taking the minimum hop-length over all paths connecting $\ell_1$ and $\ell_2$, resulting in the shortest path distance. 
\end{example}

	
%
%
%
%
\end{example}

\subsection{Spatio-Temporal Signals}

\begin{definition}
A 
{\emph{signal domain}} is a tuple $\langle D, \oplus,\otimes, \odot,\bot,\top\rangle$ where:
\begin{itemize}
	\item $\langle D, \oplus,\otimes,\bot, \top\rangle$, is a \emph{c-semiring};
	\item $\odot: D\rightarrow D$, is a \emph{negation function} such that:
	\begin{itemize}
	\item $\odot\top =\bot$;
	\item $\odot\bot = \top$;
	\item $\odot(v_1\oplus v_2)=(\odot v_1)\otimes (\odot v_2)$
	\item $\odot(v_1\otimes v_2)=(\odot v_1)\oplus (\odot v_2)$
	\item for any $v\in D$, $\odot ( \odot v ) = v$.
	\end{itemize}	
\end{itemize}	
\end{definition}

In this paper, we consider two \emph{signal domains}:
\begin{itemize}
	\item Boolean signal domain $\langle \{ \top , \bot \}, \vee, \wedge,\neg,  ,\bot,\top, \rangle$ for qualitative monitoring;
	\item {Max/min signal domain $\langle \mathbb{R}^{\infty}, \max, \min, -, \bot, \top,\rangle$} for quantitative monitoring.
\end{itemize}

For signal domains, we use the same notation and notational conventions introduced for semirings. 

\begin{definition} Let $\mathbb{T}=[0, T] \subseteq  \mathbb{R}_{\geq 0}$ a time domain and $\langle D, \oplus,\otimes, \odot ,\bot ,\top \rangle$ a \emph{signal domain}, a \emph{temporal $D$-signal} $\tsign$ is a function
$\tsign: \mathbb{T}\rightarrow D$.

\noindent
Consider a finite sequence: 
\[
\pct = [(t_{0}, d_0),\ldots,(t_{n}, d_{n})]
\]
such that for $\forall i, t_i\in \mathbb{T}$, $t_i<t_{i+1}$ and $d_i\in D$.
We let $\pct$ denote a \emph{piecewise constant temporal $D$-signal} in $\mathbb{T}=[t_0, T]$, that is 
  \[
  \pct(t) = \begin{cases}
      & d_i  \quad \text{ for } t_{i} \leq t < t_{i+1}, \\
      & d_n \quad \text{ for } t_{n}  \leq t \leq T;
\end{cases} \\
\]
\end{definition}

Given a \emph{piecewise constant temporal signal} $\pct=[(t_{0}, d_0),\ldots,(t_{n}, d_{n})]$ we use $\mathcal{T}( \pct )$ to denote the set $\{ t_0,\ldots, t_n \}$ of \emph{time steps} in $\pct$; $start(\pct)$ to denote $t_0$; while we say that $\pct$ is \emph{minimal} if and only if for any $i$, $d_i\not=d_{i+1}$. 
We also let  $\pct[ t=d ]$ denote the signal obtained from $\pct$ by adding the element $(t,d)$. 
Finally, if $\tsign_1$ and $\tsign_2$ are two $D$-temporal signals, and $op: D\times D\rightarrow D$, $\tsign_1~op~\tsign_2$ denotes the signal associating with each time $t$ the value $\tsign_1(t)~op~\tsign_2(t)$. Similarly, if $op:D_1 \rightarrow D_2$, $op~\tsign_1$ denotes the $D_2-$signal associating with $t$ the value $op~ \tsign_1(t)$.

\begin{definition}[Spatial $D$-signal] Let $L$ be a \emph{space universe}, and $\langle D, \oplus,\otimes, \odot ,\bot ,\top\rangle$ a signal domain. A \emph{spatial $D$-signal} is a function $\ssign: L\rightarrow D$.  
\end{definition}

\begin{definition}[Spatio-temporal $D$-signal]
  Let $L$ be a space universe, $\mathbb{T}=[0, T]$ a time domain, and $\langle D, \oplus,\otimes, \odot,\top,\bot\rangle$ a signal domain,   a spatio-temporal $D$-signal is a function
  \[ \sts: L \rightarrow \mathbb{T} \rightarrow D \]
\noindent
such that $\sts(\ell)=\tsign$ is a temporal signal that returns a value $\tsign(t) \in {D}$ for each time $t  \in \mathbb{T}$. We say that $\sts$ is \emph{piecewise constant} when for any $\ell$, $\sts(\ell)$ is a \emph{piecewise constant temporal signal}. \emph{Piecewise constants} signal are denoted by $\pcsts$. Moreover, we use $\mathcal{T}(\pcsts)$ to denote $\bigcup_{\ell}\mathcal{T}(\pcsts(\ell))$. Finally, we let $\pcsts@t$ denote the spatial signal associating each location $\ell$ with $\pcsts(\ell,t)$.
\end{definition}

Given a spatio-temporal signal $\sts$, we use $\sts@t$
to denote the \emph{spatial signal} at time $t$, i.e. the signal $\ssign$ such that $\ssign(\ell)=\sts(\ell)(t)$, for any $\ell \in L$.
Different kinds of signals can be considered while the signal domain $D$ is changed. Signals with  $D= \{ true , false \}$  are called \emph{Boolean signals}; with $D = \mathbb{R}^{\infty}$ are called real-valued or \emph{quantitative signals}.

 \begin{definition}[$D$-Trace]
Let $L$ be a space universe, a {\it spatio-temporal $D$-trace} is a function
$\vec x: L \rightarrow \mathbb{T} \rightarrow D_1 \times \cdots \times D_n$ 
such that for any $\ell\in L$ yields a vector of  temporal signals $\vec{x}(\ell)=(\tsign_1,\ldots,\tsign_n)$.
%
In the rest of the paper, we use $\vec{x}(\ell,t)$ to denote $\vec{x}(\ell)(t)$.
\end{definition}

\begin{example}
 We can consider a $(\mathbb{R} \times \mathbb{R})$-spatio-temporal trace of our sensor network as $\vec x: L \rightarrow \mathbb{T} \rightarrow \mathbb{R} \times \mathbb{R}$ that associates a set of temporal signals $\vec x(\ell)= (\tsign_B,\tsign_T)$ at each location, where $\tsign_B$ and  $\tsign_T$ respectively  correspond to the temporal signals of the battery and the temperature in location $\ell$, and each signal has domain $\langle \mathbb{R}, \max, \min, -, \bot, \top,\rangle$.

\end{example}

We plan to work with spatial models that can dynamically change their configurations. For this reason, we need to define a function that returns the spatial configuration at each time.

\begin{definition}[Dynamical $A$-Spatial Model]
Let $L$ be a spatial universe, a {\it dynamical $A$-spatial model} is a function $\lserv : \mathbb{T}\rightarrow  \mathbb{S}^{L}_{A}$ associating  each element in the time domain $\mathbb{T}$ with $A$-spatial model  $\lserv(t)=\langle L, \wfun\rangle$ that describes the spatial configuration of locations. 
\end{definition}

With an abuse of notation, we use $\mathcal{S}$ for both a dynamical spatial model and a static spatial model, where, for any $t$, $\mathcal{S} =\lserv(t)$. 

\begin{example}
Let us consider a MANET with a proximity graph.
Figure~\ref{fig:voronoimobility} shows two different snapshots, $\mathcal{S}(t_1)=\langle L,\wfun_1 \rangle$ and $\mathcal{S}(t_2)=\langle L,\wfun_2 \rangle$, of the the dynamical spatial model $S$ for time $t_1$ and $t_2$. We can see that locations $\ell_1$ and $\ell_2$ change their position, this changed also the Voronoi diagram and the proximity graph. 
\begin{figure}[H]
    \centering
    \includegraphics[scale=0.4]{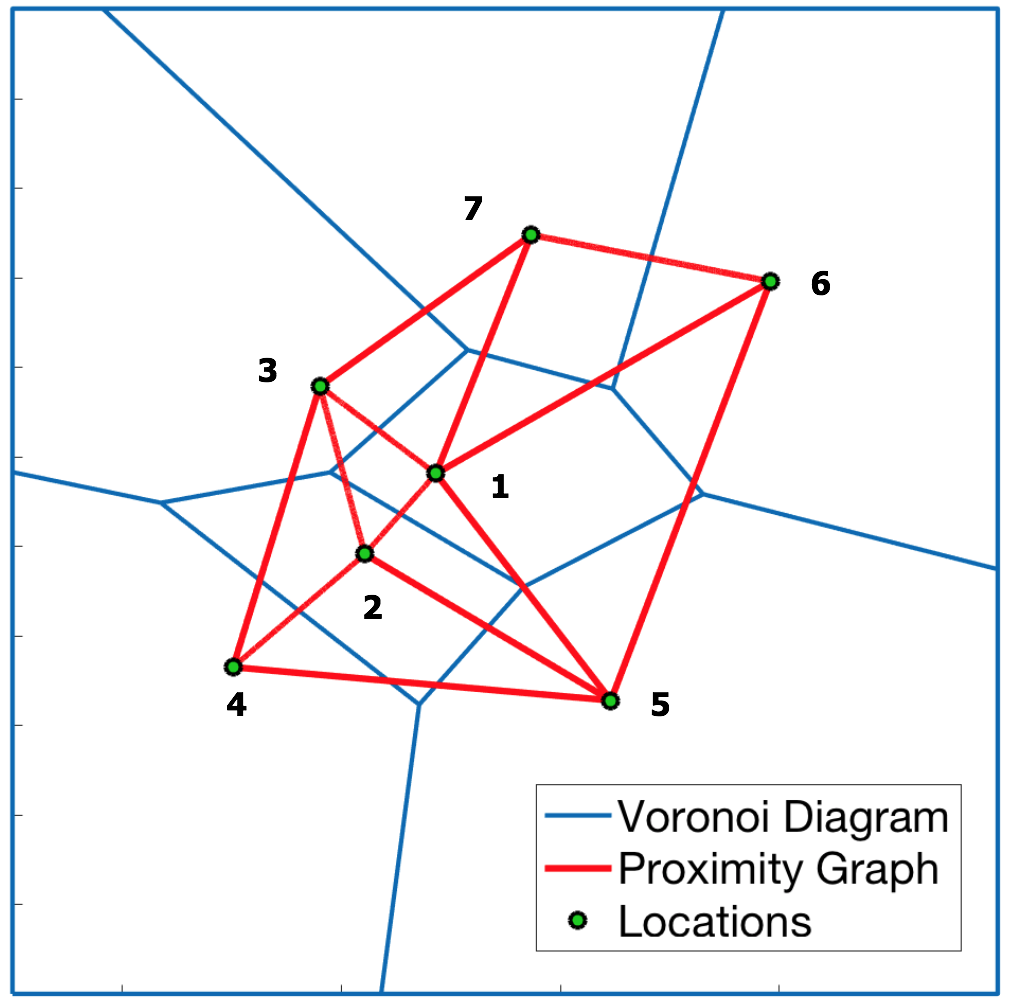}
    \includegraphics[scale=0.39]{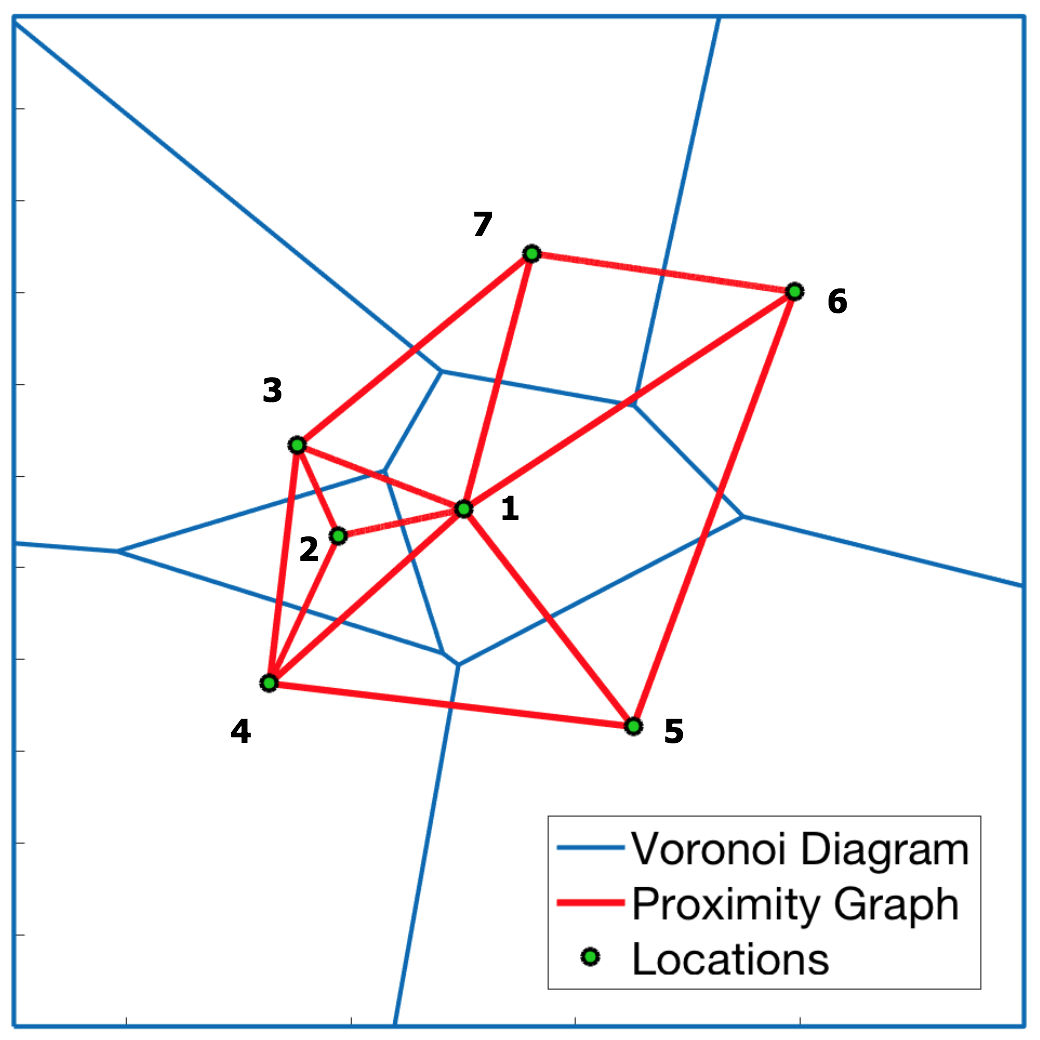}
    \caption{Two snapshots of a spatial model with 7 locations $\ell_1,\dots,\ell_7$ that move in a 2D Euclidean space. The plane is partitioned using a Voronoi Diagram ({\color{blue} blue}). In {\color{red} red} we have the proximity graph.}
    \label{fig:voronoimobility}
\end{figure}
\end{example}

\section{Spatio-temporal Reach and Escape Logic}
\label{sec:ReachSTL}
In this section,  we present the {\it Spatio-Temporal Reach 
and Escape Logic}~(STREL), an extension of the {\it Signal Temporal Logic}. 
We define the syntax and the semantics of STREL, describing 
in detail the spatial operators and their expressiveness. 

The syntax of STREL is given by
{\small
\[
\varphi :=  
    \mu \mid  
    \neg \varphi \mid  
    \varphi_{1} \wedge \varphi_{2} \mid  
    \varphi_{1} \: \until{[t_{1},t_{2}]} \: \varphi_{2} \mid 
    \varphi_{1} \: \since{[t_{1},t_{2}]} \: \varphi_{2} \mid  
    \varphi_{1} \: \reach{[d_{1},d_{2}]}{f:A\rightarrow B} \: \varphi_{2} \mid 
    \escape{[d_{1},d_{2}]}{f:A\rightarrow B}  \: \varphi
\]
}
where  $\mu$ is an {\it atomic predicate} ($AP$), {\it negation} $\neg$ and {\it conjunction} $\wedge$ are the standard Boolean connectives, $\until{[t_{1},t_{2}]}$ and  $\since{[t_{1},t_{2}]}$ are the {\it Until} and the {\it Since} temporal modalities, with $[t_{1},t_{2}]$ a real positive closed interval. For more details about the temporal operators, we refer  the reader to~\cite{MalerN13, Maler2004, Donze2013}.
The spatial modalities are the  {\it reachability} $\reach{[d_{1},d_{2}]}{f:A\rightarrow B}$ and the {\it escape} $\escape{[d_{1},d_{2}]}{f:A\rightarrow B}$ operators, with $f:A \rightarrow B$ a \emph{distance function} (see Definition~\ref{ex:distancefunction}), $B$ a \emph{distance domain}, and $d_{1},d_{2}\in B$ with $d_1\leq_{B} d_2$.
In what follows, we omit the type info about function $f$ when it is clear from the context or where it does not play any role.

The reachability operator $\phi_1 \reach{[d_1, d_2]}{f}\phi_2$ describes the behavior of reaching a location satisfying property $\phi_2$, through a path with all locations that satisfy  $\phi_1$, and with a  distance that belongs to $[d_1, d_2]$.
The escape operator $\escape{[d_1, d_2]}f{\phi}$, instead, describes the possibility of escaping from a certain region via a route passing only through locations that satisfy $\phi$,  with the distance between the starting location of the path and the last that belongs to the interval $[d_1, d_2]$. Note that the main difference between these two operators is that the distance of the reach operator is with respect to the path, instead, the distance of the escape operator is between the locations, so it considers the shortest path distance between the starting location and the last.


As customary, we can derive the {\it disjunction} operator $\vee$ and the future  {\it eventually} $\ev{[t_{1},t_{2}]}$ and  {\it always} $\glob{[t_{1},t_{2}]}$ operators from  the until temporal modality, and the corresponding past variants from the since temporal modality, see~\cite{MalerN13} for details. 
We can define also other three derived spatial operators: the {\it somewhere} $\somewhere{\leq d}{f: A\rightarrow B}\phi$ and the {\it everywhere} $\everywhere{\leq d}{f: A\rightarrow B}\phi$ that describe behaviors of some or of all locations at a certain distance from a specific point, and the {\it surround} that expresses the topological notion of being surrounded by a $\phi_2$-region, while being in a $\phi_1$-region, with additional metric constraints. A more thorough discussion of the spatial operators will be given after introducing the semantics.

\subsection{Semantics}
The semantics of STREL is  evaluated point-wise  at each time and at each location. We stress that each STREL formula $\varphi$ abstracts from the specific domain used to express the satisfaction value of $\varphi$.
These, of course, are needed to define the semantics.  In the following, we assume that $D_1$ is the domain of the spatio-temporal traces, $D_2$ is the semiring where the logic is evaluated and $B$ is a distance domain as defined in Definition~\ref{def:distDom}. 

\begin{definition} [Semantics] 
\label{generalsemantics}
Let $\mathcal{S}$ be a dynamical $A$-spatial model with $L$ the space universe,  $D_1$ and $D_2$ be two signal domains, and $\vec x$ be a {\it spatio-temporal $D_1$-trace} for $L$.
The $D_2$-monitoring function $\fmon$ of $\vec x$ is recursively defined in Table~\ref{tab:monitoring}.

\end{definition}

\begin{table}[ht]
{\footnotesize
\begin{minipage}[b]{1.\linewidth}
  \hspace{-0.0cm}
\begin{tabular}[b]{rcl}

$\fmon( \lserv, \vec{x}, \mu, t, \ell)$ & $=$ & $\iota(\mu,\vec{x}(\ell,t))$ \\[.2cm]

$\fmon(\lserv, \vec{x}, \neg\varphi, t, \ell)$ & $=$ & $\odot_{D_{2}} \fmon(\lserv, \vec{x}, \varphi, t, \ell)$  \\[.2cm]

$\fmon( \lserv, \vec{x}, \varphi_1 \wedge \varphi_2, t, \ell)$ & $=$ & $\fmon( \lserv, \vec{x}, \varphi_1, t, \ell) \otimes_{D_2} \fmon(\lserv, \vec{x}, \varphi_2, t, \ell)$  \\[.2cm]  

$\fmon( \lserv, \vec{x}, \varphi_{1} \: \until{[t_{1},t_{2}]} \: \varphi_{2}, t, \ell)$ & $=$ & ${\bigoplus_{D_2}}_{t' \in [t+t_{1}, t+t_{2}]} \big (\fmon(  \lserv, \vec{x}, \varphi_2, t', \ell) \otimes_{D_2} {\bigotimes_{D_2}}_{t'' \in [t, t']} \fmon( \lserv, \vec{x}, \varphi_1, t'', \ell) \big)   $ \\[.2cm]
      
$\fmon( \lserv, \vec{x}, \varphi_{1} \: \since{[t_{1},t_{2}]} \: \varphi_{2}, t, \ell)$ & $=$ &  
${\bigoplus_{D_2}}_{t' \in  [t-t_{2}, t-t_{1}]} \big (\fmon(\lserv, \vec{x}, \varphi_2, t', \ell) \otimes_{D_2} {\bigotimes_{D_2}}_{t'' \in [t', t]} \fmon(  \lserv, \vec{x}, \varphi_1, t'', \ell) \big)   $ \\[.2cm]

 $\fmon(\lserv, \vec{x}, \varphi_{1} \: \reach{[d_{1},d_{2}]}{f:A\rightarrow B} \: \varphi_{2}, t, \ell)$ & $=$ \\
\multicolumn{3}{r}{
${\bigoplus_{D_2}}_{\tau\in Routes(\lserv(t),\ell)}
        ~~{ \bigoplus_{D_2}}_{i : \left(d_{\tau}^{f}[i] \in [d_{1},d_{2}]\right)}
        \left(
            \fmon(  \lserv, \vec{x}, \varphi_2, t, \tau[i])
            \otimes_{D_{2}}
            {\bigotimes_{D_2}}_{j < i} 
                \fmon(  \lserv, \vec{x}, \varphi_1, t, \tau[j])         %
        \right)$} \\[.2cm]
$\fmon( \lserv, \vec{x}, \escape{[d_{1},d_{2}]}{f:A\rightarrow B} \: \varphi, t, \ell)$ & $=$ &
${\bigoplus_{D_2}}_{\tau\in Routes(\lserv(t),\ell)}
        ~~{\bigoplus_{D_2}}_{\ell'\in \tau : \left(d_{\lserv(t)}^{f}[\ell,\ell'] \in [d_{1},d_{2}]\right)}
            ~~{\bigotimes_{D_2}}_{i \leq \tau(\ell')} 
                \fmon(  \lserv, \vec{x}, \varphi, t, \tau[i])$ \\[.2cm]
\end{tabular}
\end{minipage}
	\caption{Monitoring function.}
	\label{tab:monitoring}
	}
\end{table}



Given a formula $\phi$, the function $\fmon( \lserv, \vec{x}, \phi, t, \ell)$ corresponds to the evaluation of the formula at time $t$ in the location $\ell$. 
The procedure is exactly the same for different choices of the formula evaluation domain, just operators have to be interpreted according to the chosen semirings and signal domains. 
In particular, the choice of the signal domain $D_2$ produces different types of semantics. 
In this paper, we consider two signal domains: $\mathbb{B}$ and $\mathbb{R^\infty}$, giving rise to qualitative and quantitative monitoring, corresponding respectively to a Boolean answer and a real satisfaction value. 
For the
Boolean signal domain ($D_2 = \langle \{ \top , \bot \}, \vee, \wedge,\neg \rangle $ ), 
 we say that $(\lserv , \vec{x}(\ell,t))$ satisfies a formula $\phi$ iff $\fmon( \lserv, \vec{x}, \phi, t, \ell)= \top$. 
 For  {max/min signal domain $\langle \mathbb{R}^{\infty}, \max, \min, -, \bot, \top,\rangle$} we say that $(\lserv , \vec{x}(\ell,t))$ satisfies a formula $\phi$ iff
 $\fmon( \lserv, \vec{x}, \phi, t, \ell) > 0$.
In the following, we use $\tilde{\sigma}^{\lserv,\vec{x}}_{\phi}$ to denote the spatio-temporal $D_2$-signal such that for any $t$ and $\ell$ $\fmon( \lserv, \vec{x}, \phi, t, \ell)=\tilde{\sigma}^{\lserv,\vec{x}}_{\phi}(\ell, t)$.

We describe now in detail the semantics through the sensor network example as the system on which we specify our properties, in particular, we use  the graph in Figure~\ref{fig:spprop} to describe the spatial operators.
\begin{example}[ZigBee protocol]
\label{ex:zigbee}
 In Figure~\ref{fig:spprop}, the graph represents a MANET. In particular, we consider the nodes with three different  roles such as the ones implemented in the ZigBee protocol: {\it coordinator}, {\it router} and {\it EndDevice}. The Coordinator node $(\coord)$, represented in green color in the graph, is unique in each network and is responsible to initialize the network.  After the initialization of the network has started, the coordinator behaves as a router. 
The Router node $(\router)$, represented in red color in the graph, acts as an intermediate router, passing on data from other devices. The EndDevice node $(\device)$, represented in blue, can communicate 
only with a parent node (either the Coordinator or a Router) and it is unable to relay data from other devices.
    Nodes move in  space and the figure corresponds to the spatial configuration at a fixed time $t$. 
    As spatio-temporal trace, let us consider a  $\{ \coord, \router, \device \}  \times \mathbb{R}$-trace $\vec x: L \rightarrow \mathbb{T} \rightarrow \mathbb{Z} \times \mathbb{R}^{\infty}$ denoting the pair: (kind of node, level of battery), i.e. $\vec x (\ell, t)= (\coord, b)$ if $\ell$ is a coordinator, $\vec x (\ell, t)= ( \router,b) $ if $\ell$ is a router, and  $\vec x (\ell, t)= (\device,b) $ if $\ell$ is an end node, where $b$ is the level of the battery. 
\end{example}

\noindent{\bf Atomic Proposition.} 
The function $\iota: AP\times D_1^{n} \rightarrow D_2$  is the \emph{signal interpretation function} and permits to translate the input trace in a different  ${D}_2$-spatio temporal signal for each atomic proposition in $AP$, which will be the input of the monitoring procedure.
Different types of atomic propositions and signal interpretations are admissible.  E.g., we can simply consider a finite set 
$\{p_1, \dots, p_n \}=AP$ and an interpretation function $\iota(p_i,\vec x(\ell,t))=\top$ iff $x_i(\ell,t)=\top$. In Figure~\ref{fig:spprop}, we can consider atomic propositions describing the type of node, i.e., the Boolean propositions $\{ \coord, \router, \device \}$ are true if the node is of the corresponding type. In case of real valued signals and of a quantitative interpretation of the logic ($D_2$ being in this case the real valued max/min semiring), we  can consider inequalities $\mu=(g(\vec{x})\geq 0)$ for some real function $g$ and define $\iota(\mu,\vec x(\ell,t))=g(\vec x(\ell,t))$, e.g. $b > 0.5$, that means "the level of the battery is greater than $50\%$

\noindent{\bf Negation.} The negation operator is interpreted with the negation function $\odot_{D_{2}}$ of the chosen signal domain; e.g. 
$\fmon(\lserv, \vec{x}, \neg\varphi, t, \ell)= \neg \fmon( \lserv, \vec{x}, \varphi, t, \ell)$ for the Boolean signal domain and $\fmon(\lserv, \vec{x}, \neg\varphi, t, \ell)= - \fmon( \lserv, \vec{x}, \varphi, t, \ell)$ for the quantitative ones.

\noindent{\bf Conjunction and Disjunction} The conjunction operator $\varphi_1 \wedge \varphi_2$ is interpreted with the combine operator $\otimes_{D_2}$,  i.e. $\fmon( \lserv, \vec{x}, \varphi_1 \wedge \varphi_2, t, \ell) = \fmon( \lserv, \vec{x}, \varphi_1, t, \ell)  \otimes_{D_2} \fmon(\lserv, \vec{x}, \varphi_2, t, \ell)$, which corresponds to wedge $\wedge$ operator for the Boolean semantics. This means that $\fmon( \lserv, \vec{x}, \varphi_1 \wedge \varphi_2, t, \ell)=1$ iff both $\fmon( \lserv, \vec{x}, \varphi_1)$ and 
$\fmon( \lserv, \vec{x}, \varphi_2)$ are equal to 1. 
For the quantitative semantics $\otimes_{D_2}$ is interpreted ad the minimum operator, so, $\fmon( \lserv, \vec{x}, \varphi_1 \wedge \varphi_2, t, \ell) = \min(\fmon( \lserv, \vec{x}, \varphi_1, t, \ell) , \fmon(\lserv, \vec{x}, \varphi_2, t, \ell))$. Similarly the disjunction $\varphi_1 \vee \varphi_2$ is interpreted through the choose operator $\oplus_{D_2}$,  i.e. $\fmon( \lserv, \vec{x}, \varphi_1 \vee \varphi_2, t, \ell) = \fmon( \lserv, \vec{x}, \varphi_1, t, \ell)  \oplus_{D_2} \fmon(\lserv, \vec{x}, \varphi_2, t, \ell)$, which corresponds to the $\vee$ for the Boolean semantics and to the $\max$ for the quantitative one. 

In the rest of the description, we focus on the Boolean semantics, i.e. $D_2 = \langle \{ \top , \bot \}, \vee, \wedge,\neg \rangle $, the Quantitative semantics can be derived substituting $\vee, \wedge$ with $\min, \max$, as seen for conjunction and disjunction.

\noindent{\bf Until.} \[\fmon( \lserv, \vec{x}, \varphi_{1} \: \until{[t_{1},t_{2}]} \: \varphi_{2}, t, \ell) = \bigvee_{t' \in t + [t_{1}, t_{2}]} \linebreak (\fmon(  \lserv, \vec{x}, \varphi_2, t', \ell) \wedge \bigwedge_{t'' \in [t, t']} \fmon( \lserv, \vec{x}, \varphi_1, t'', \ell) \big) .\]
As customary, $(\lserv, \vec{x}(\ell,t))$ satisfies $\varphi_{1} \: \until{[t_{1},t_{2}]} \: \varphi_{2}$ iff it satisfies $\varphi_{1}$ from $t$ until, in a time between $t_{1}$ and $t_{2}$ time units in the future, $\varphi_{2}$ becomes true. Note how the temporal operators are evaluated in each location separately.

\noindent{\bf Since.} \[\fmon( \lserv, \vec{x}, \varphi_{1} \: \since{[t_{1},t_{2}]} \: \varphi_{2}, t, \ell)  = \bigvee_{t' \in t - [-t_{2}, -t_{1}]} \linebreak \big (\fmon(\lserv, \vec{x}, \varphi_2, t', \ell) \wedge \bigwedge_{t'' \in [t', t]} \fmon(  \lserv, \vec{x}, \varphi_1, t'', \ell) \big)\]
 $(\lserv, \vec{x}(\ell,t))$ satisfies $\varphi_{1} \: \since{[t_{1},t_{2}]} \: \varphi_{2}$ iff it satisfies $\varphi_{1}$ from now since, in a time between $t_{1}$ and $t_{2}$ time units in the past, $\varphi_{2}$ was true.

Except for the interpretation function, the semantics of the Boolean and the temporal operators is directly derived from and coincident with that of STL (qualitative for Boolean signal domain and quantitative for an $\mathbb{R}^\infty$ signal domain), see~\cite{Maler2004, Donze2013} for details. 


{\small
\noindent{\bf Reachability.} \[\fmon(\lserv, \vec{x}, \varphi_{1} \: \reach{[d_{1},d_{2}]}{f} \: \varphi_{2}, t, \ell)=
\bigvee_{\tau\in Routes(\lserv(t),\ell)} \linebreak
        \bigvee_{i:\left(d_{\tau}^{f}[i] 
        \in [d_{1},d_{2}]\right)} 
        ( \fmon(  \lserv, \vec{x}, \varphi_2, t, \tau[i])
            \wedge 
            \bigwedge_{j < i}  \fmon(  \lserv, \vec{x}, \varphi_1, t, \tau[j]) )\]
}

\noindent $(\lserv , \vec{x}(\ell,t))$ satisfies
$\varphi_{1} \: \reach{d}{f} \: \varphi_{2}$ iff it satisfies $\varphi_2$ in a location $\ell'$ reachable from $\ell$ through a route $\tau$, with a length $d_{\tau}^{f}[\ell']$ belonging to the interval $[d_1, d_2]$, and such that $\tau[0]=\ell$ and all its elements with index less than $\tau(\ell')$ satisfy $\varphi_1$. 
 In Figure~\ref{fig:spprop}, we report an example of reachability property, considering f as the $hop$ function described in Example~\ref{ex:distancefunction}. In the graph, the location  $\ell_6$ (meaning the trajectory $\vec{x}$ at time t in position $\ell_6$) satisfies $\device \: \reach{[0,1]}{hop} \: \router$. Indeed, there exists a route $\tau = \ell_6\ell_5$ such that $d_{\tau}^{hop}[1]=1$, where $\tau[0]=\ell_6$, $\tau[1]=\ell_5$, $\tau[1]$ satisfies the red property (it is a router) and all the other elements of the route satisfy the blue property (they are end-devices). 

\noindent{\bf Escape.} \[\fmon( \lserv, \vec{x}, \escape{[d_1, d_2]}{f} \: \varphi, t, \ell)= 
\bigvee_{\tau\in Routes(\lserv(t),\ell)} \linebreak
        \bigvee_{\ell'\in \tau:\left(d_{\lambda(t)}^{f}[\ell,\ell'] \in [d_1, d_2]\right) }
            ~\bigwedge_{i \leq \tau(\ell')} 
                \fmon(  \lserv, \vec{x}, \varphi, t, \tau[i]).\]

\noindent $(\lserv, \vec{x}(\ell,t))$
              satisfies  $\escape{d}{f} \: \varphi$ if and only if there exists a route $\route$ and a location $\ell'\in \route$ such that $\route[0]=\ell$ and $d_\mathcal{S}[\tau[0],\ell']$ belongs to the interval $[d_1, d_2]$, while $\ell'$ and all the elements $\tau[0],...\tau[k-1]$ (with $\route(\ell')=k$) satisfy satisfies $\varphi$.             
In Figure~\ref{fig:spprop}, we report an example of escape property. In the graph,  the location $\ell_{10}$ satisfies $ \escape{[2, \infty]}{hop} \: \neg \device$. Indeed, there exists a route $\tau = \ell_{10}\ell_7\ell_8$ such that $\tau[0]=\ell_{10}$, $\tau[2]=\ell_8$, $d_S^{hop}[\ell_{10},\ell_1]=2$ and $\ell_{10}$, $\ell_7$ and $\ell_8$ do not satisfy the blue property, i.e. they are not end-devices. Note that the route $\ell_{10}\ell_{11}\ell_{16}$ is not a good route to satisfy the property because the distance $d_S^{hop}[\ell_{10},\ell_{16}]=1$. 

\begin{figure}[ht]
{\small
\center
\begin{tikzpicture}
  [scale=.6,auto=left,every node/.style={circle,thick,inner sep=0pt,minimum size=6mm}]
  \node (1) [fill=\blu, draw = black] at (-3,-3) {\wh{1}};
  \node (2) [fill=\blu, draw = black] at ( 1,-2) {\wh{2}};
  \node (3) [fill=\blu, draw = black] at ( 3,-1) {\wh{3}};
  \node (4) [fill=\blu, draw = black] at ( -3, 1) {\wh{4}};
  \node (5) [fill=\rouge, draw = black] at ( 1, 2) {\wh{5}};
  \node (6) [fill=\blu, draw = black] at (-1, 2) {\wh{6}};
  \node (7) [fill=\rouge, draw = black] at (0, 0) {\wh{7}};
  \node (8) [fill=\rouge, draw = black] at (-2,-1) {\wh{8}};
  \node (9) [fill=\rouge, draw = black] at (3,3) {\wh{9}};
  \node (10) [fill=\green, draw = black] at (4,1) {\wh{10}};
  \node (11) [fill=\rouge, draw = black] at (5,0) {\wh{11}};
  \node (12) [fill=\blu, draw = black] at (6,-2) {\wh{12}};
  \node (13) [fill=\blu, draw = black] at (8,1) {\wh{13}};
  \node (14) [fill=\blu, draw = black] at (5,3) {\wh{14}};
  \node (15) [fill=\blu, draw = black] at (8,-1) {\wh{15}};
  \node (16) [fill=\rouge, draw = black] at (6.5,1.8) {\wh{16}};

 \draw [-] (1) -- (8) node[midway] {};
 \draw [-] (2) -- (7) node[midway] {};
 \draw [-] (8) -- (6) node[midway] {};
  \draw [-] (8) -- (7) node[midway] {};
 \draw [-] (7) -- (10) node[midway] {};
 \draw [-] (7) -- (5) node[midway] {};
 \draw [-] (3) -- (10) node[midway] {};
  \draw [-] (6) -- (5) node[midway] {};
 \draw [-] (10) -- (11) node[midway] {};
 \draw [-] (10) -- (9) node[midway] {};
 \draw [-] (11) -- (15) node[midway] {};
 \draw [-] (11) -- (12) node[midway] {};
\draw [-] (9) -- (14) node[midway] {};
\draw [-] (10) -- (14) node[midway] {};
\draw [-] (10) -- (16) node[midway] {};
\draw [-] (11) -- (16) node[midway] {};
\draw [-] (13) -- (16) node[midway] {};
\draw [-] (8) -- (4) node[midway] {};

 \end{tikzpicture}
\caption{
Example of spatial properties. {\bf Reachability:} $\device\: \reach{[0,1]}{hop} \: \router$. {\bf Escape:} $ \escape{[2, \infty]}{hop} \: \neg \device$.
{\bf Somewhere}: $\somewhere{[0, 4] }{hop} \coord$.  {\bf Everywhere}:    $\everywhere{[0, 2] }{hop} \router$.   {\bf Surround:} $ (\coord \vee \router ) \surround{[0,3]}{hop} \: \device$. }
\label{fig:spprop}
}
\end{figure}

We can also derive  other three spatial operators:  {\it somewhere},  {\it everywhere} and  {\it surround}. 

\noindent{\bf Somewhere.}  $ \somewhere{ [0, d] }{f} \varphi := true  \reach{[0, d]}{f} \varphi $
is satisfied by $(\lserv , \vec{x}(t,\ell))$
iff there exists a location that satisfies $\varphi$ reachable from $\ell$ via a route $\tau$ with a distance  belonging to the interval $[0, d]$. This length is computed via the function $f$. In Figure~\ref{fig:spprop}, all the locations  satisfy the property $\somewhere{[0,3]}{hop} \coord$ because, for all $\ell_i$, there is always a path $\route = \ell_i \dots \ell_{10}$ with a length $d_\route^{hop}(k) \leq 3$, where $\tau[0]=\ell_{i}$, $\tau[k]=\ell_{10}$, and $\ell_{10}$ satisfies the green property (it is a coordinator).

\noindent{\bf Everywhere.}  $ \everywhere{[0,d]}{f} \varphi := \neg \somewhere{[0,d]}{f} \neg \varphi $ 
is satisfied by $(\lserv, \vec{x}(\ell,t))$ iff  all the locations reachable from $\ell$ via a path, with length belonging to the interval $[0, d]$, satisfy $\varphi$. In Figure~\ref{fig:spprop}, there are no locations that satisfy the property $\everywhere{[0,2]}{hop} \router$ because for all the locations $\ell_i$ there is a path $\tau=\ell_i\ell_j$ such that $\ell_j$ does not satisfy the red property (it is not a router).

\noindent{\bf {Surround}.}  $\varphi_{1} \surround{[0, d]}{f} \varphi_{2} := \varphi_{1} \wedge \neg (\varphi_{1}\reach{[0, d]}{f} \neg (\varphi_1 \vee \varphi_{2}) \wedge \neg (\escape{\neg [d, \infty]}{f}  \varphi_{1}) $ expresses the topological notion of being surrounded by a $\varphi_2$-region, while being in a $\varphi_{1}$-region,  with an additional metric constraint. The operator has been introduced in~\cite{CianciaLLM16} as a basic operator, while here it is a derived one. The idea is that one cannot escape from a $\varphi_{1}$-region without passing from a location that satisfies $\varphi_2$ and, in any case, one has to reach a $\varphi_2$-location via a path with a length less or equal to $d$.  In Figure~\ref{fig:spprop}, the location $\ell_{10}$ satisfies the property $ (\coord  \: \vee  \: \router ) \surround{[0,3]}{hop} \: \device$. In fact, it satisfies the green property,  it cannot reach a location that does not satisfy the blue or the red property via a path with length lesser or equal to 3 and it cannot escape through a path satisfying the green or red properties at a distance more than 3. 

The operators can be arbitrarily composed to specify complex properties as we will see in Section~\ref{sec:ZigBee} and~\ref{sec:epidemic}.

\subsection{Invariance properties of the Euclidean spatial model}

The properties we consider with respect to the Euclidean spatial model are typically local and depend on the relative distance and position among nodes in the plane. As such, they should be invariant with respect to the change of coordinates, i.e. with respect to the isometric transformations of the plane. This class of transformations include translations, rotations, and reflections, and can be described by matrix multiplications of the form   
\[
\begin{bmatrix}
    x'_{\ell}      \\
    y'_{\ell}    \\ 
    1              \\ 
\end{bmatrix}
= 
\begin{bmatrix}
       \beta \cos (\alpha)   &   -   \beta \sin (\alpha)  & \beta t_x     \\
       \gamma \sin (\alpha)  &       \gamma \cos (\alpha)  & \gamma t_y \\
    0 &  0 & 1     
\end{bmatrix} 
\begin{bmatrix}
    x_{\ell}      \\
    y_{\ell}    \\ 
    1              \\ 
\end{bmatrix}
\]
where $\beta, \gamma$ are the stretching, $\alpha$ the rotation, and $t_x, t_y$ the translation factor respectively.
Invariance of satisfaction of spatial properties holds in STREL logic, for the Euclidean space model of Definition \ref{def:Euclidean}. Consider an Euclidean space model $\mathcal{E}(L,\mu, R) = \langle L, \wfun^{\mu, R} \rangle$ and $\mathcal{E}(L,\mu', R)= \langle L, \wfun^{\mu', R} \rangle$, obtained by applying an isometric transformation $A$:  $\mu'(\ell) = A(\mu(\ell))$, for invariance to hold, we need to further require that distance predicates used in spatial operators are invariant for isometric transformations. More specifically, for any isometry $A$, we require a distance predicate $d$ on the semiring $\mathbb{R}^{\infty}\times\mathbb{R}^{\infty}$ to satisfy $d((x,y)) = d(A((x,y)))$. This is the case for the norm-based predicates used in the examples,  of the form $d((x,y)) = \|(x,y)\|_2 \leq r$.  

Notice that, the path structure is preserved (the edges given by $R$ is the same), and the truth of isometry-invariant distance predicates along paths in $\mathcal{E}(L,\mu, R)$ and $\mathcal{E}(L,\mu', R)$ is also the same. This straightforwardly implies that the truth value of spatial operators will be unchanged by isometry. 

\begin{proposition}[Equisatisfiability under isometry] Let  $\mathcal{E}(L,\mu, R) = \langle L, \wfun^{\mu, R} \rangle$
be  an Euclidean spatial model and  $\mathcal{E}(L,\mu', R)= \langle L, \wfun^{\mu', R} \rangle$ an isometric transformation of the former. Consider a spatial formula $\varphi_{1} \: \reach{ d}{f} \: \varphi_{2}$ or $\escape{d}{f}  \: \varphi_{1}$, where $d$ is an isometry preserving predicate.
Assume $\fmon ( \lserv, \vec{x}, \varphi_{j}, t, \ell) = \fmon'( \lserv, \vec{x}, \varphi_{j}, t, \ell)$, $j=1,2$, where $\fmon$ and $\fmon'$ are the monitoring functions for the two spatial models. Then it holds that 
$\fmon( \lserv, \vec{x}, \varphi_{1} \: \reach{ d}{f} \: \varphi_{2}, t, \ell) = \fmon'( \lserv, \vec{x}, \varphi_{1} \: \reach{ d}{f} \: \varphi_{2}, t, \ell)$ and $\fmon(\lserv, \vec{x}, \escape{d}{f}  \: \varphi_{1}, t, \ell) = \fmon'( \lserv, \vec{x}, \escape{d}{f}  \: \varphi_{1}, t, \ell)$, for all $\ell$ and $t$. 
\end{proposition}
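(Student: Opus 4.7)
The plan is to exploit two facts: the relation $R$ (hence the graph adjacency and the set of routes) is identical in $\mathcal{E}(L,\mu,R)$ and $\mathcal{E}(L,\mu',R)$, and the distance predicate used in the intervals is assumed isometry-invariant. Combining these, the outer and inner indexed $\bigvee$/$\bigwedge$ expressions in the semantic clauses of $\reach{d}{f}$ and $\escape{d}{f}$ unfold over identical index sets with identical truth values for the metric side-condition, so by the hypothesis on subformulas the two monitoring functions must agree.

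Concretely, I would first argue that $Routes(\lserv(t),\ell)$ computed in the two spatial models coincide. Since $\wfun^{\mu,R}$ and $\wfun^{\mu',R}$ are defined on the same pairs $(\ell_1,\ell_2)\in R$ and only differ in the vector-valued label they attach, a sequence $\ell_0\ell_1\cdots$ is a route in one model exactly when it is a route in the other. This justifies treating $\tau$ as ranging over a common set in both semantic unfoldings.

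Next, I would check that for every route $\tau$, every index $i$, and every pair of locations $\ell,\ell'$, the metric predicates $d_\tau^{f}[i]\in[d_1,d_2]$ and $d_{\lserv(t)}^{f}[\ell,\ell']\in[d_1,d_2]$ have the same truth value in the two models. An isometry $A$ of $\mathbb{R}^2$ has the form $A(v)=Rv+t$ with $R$ orthogonal, so the transformed edge labels satisfy $\mu'(\ell_1)-\mu'(\ell_2)=R(\mu(\ell_1)-\mu(\ell_2))$; that is, the label on each edge is mapped by the linear part $R$ of the isometry. The hypothesis that the distance predicate on the semiring $\mathbb{R}^{\infty}\times\mathbb{R}^{\infty}$ is invariant under $A$ means exactly that the interval membership is unchanged edge-by-edge, and for norm-based predicates such as $\|\cdot\|_2\le r$ this lifts to cumulative path distances because sums of $\|\cdot\|_2$-values are preserved by orthogonal transformations.

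With the route sets and the metric side-conditions identified between the two models, the clauses in Table~\ref{tab:monitoring} for $\reach{[d_1,d_2]}{f}$ and $\escape{[d_1,d_2]}{f}$ reduce to the same iterated $\bigvee$/$\bigwedge$ of subformula values; by the assumption $\fmon(\lserv,\vec{x},\varphi_j,t,\ell)=\fmon'(\lserv,\vec{x},\varphi_j,t,\ell)$ for $j=1,2$, these agree pointwise, yielding the claimed equality. The main delicate point is the lifting from a single-edge invariance $d((x,y))=d(A((x,y)))$ to invariance of the cumulative distance $d_\tau^f$ along a route: this is immediate for $hop$ and for norm-based distances on Euclidean edges (since $\|R v\|_2=\|v\|_2$ and $f$ then distributes over the monoid sum $+_B$), but a general distance function would require an additional hypothesis that $f$ commutes with the action of $R$ on edge labels; I would therefore state the invariance assumption explicitly at the edge level and note that it propagates along routes under this mild compatibility condition.
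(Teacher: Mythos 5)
Your proof is correct and follows essentially the same route as the paper, which argues (informally, in the paragraph preceding the proposition) that the edge relation $R$ and hence the route sets are identical in the two models, that isometry-invariant distance predicates take the same truth values along corresponding paths, and that the semantic clauses for $\reach{d}{f}$ and $\escape{d}{f}$ therefore unfold identically given the hypothesis on the subformulas. Your additional observation --- that passing from edge-level invariance $d((x,y))=d(A((x,y)))$ to invariance of the cumulative distance $d_\tau^f$ requires $f$ to be compatible with the orthogonal part of the isometry (automatic for $hop$ and for norm-based distances, but an extra hypothesis in general) --- is a genuine refinement of a step the paper treats as immediate, and is worth making explicit.
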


\section{Monitoring STREL}
\label{sec:alg}
In this section, we present a  monitoring algorithm that can be used to check if a given signal satisfies or not a STREL property. 
The proposed algorithm follows an \emph{offline} approach. Indeed, the proposed algorithm takes as input the complete spatio-temporal signal together with the property we want to monitor. 
%
%
%


\subsection{Offline monitor}
Offline monitoring is performed via a function $\mathsf{monitor}$ 
that takes as inputs a dynamical spatial model $\lserv$, a trace $\vec{x}$ 
and a formula $\phi$ and returns the \emph{piecewise constant 
spatio-temporal signal} $\pcsts$ representing the monitoring 
of $\phi$.
%
The function $\mathsf{monitor}$ is defined by induction on the syntax of the formula (Algorithm \ref{algo:part1}). 
The spatio-temporal signal resulting from the monitoring of atomic proposition $\mu$ is just obtained by applying function $\iota(\mu)$ to the trace $\mathbf{x}$. The spatio-temporal signals associated with $\neg\varphi$ and $\varphi_1\wedge \varphi_2$ are obtained by applying operators $\odot_{D_2}$ and $\otimes_{D_2}$ to the signals resulting from the monitoring of $\varphi$ and from the monitoring of $\varphi_1$ and $\varphi_2$\, where $\oplus_{D_2}$, $\otimes_{D_2}$ and $\odot_{D_{2}}$ depend the \emph{signal domain} used to represent satisfaction values.

\algnewcommand\algorithmicswitch{\textbf{switch}}
\algnewcommand\algorithmiccase{\textbf{case}}
\algnewcommand\algorithmicassert{\texttt{assert}}
\algnewcommand\Assert[1]{\State \algorithmicassert(#1)}%
\algdef{SE}[SWITCH]{Switch}{EndSwitch}[1]{\algorithmicswitch\ #1\ \algorithmicdo}{\algorithmicend\ \algorithmicswitch}%
\algdef{SE}[CASE]{Case}{EndCase}[1]{\algorithmiccase\ #1}{\algorithmicend\ \algorithmiccase}%
\algtext*{EndSwitch}%
\algtext*{EndCase}%

\begin{algorithm}[t]%
  \label{algo:part1}
  \caption{Monitoring algorithm}
  \flushleft \small
\begin{multicols}{2}
\begin{algorithmic}[1]
\Function{Monitor}{$\lserv$, $\vec{x}$, $\psi$}
\Case{$\psi=\nu$}
\State $\pcsts=[]$
\ForAll{ $\ell \in L$ } 
\ForAll{ $t\in \mathcal{T}(\vec{x}(\ell))$ }
\State $\pcsts(\ell,t)=\iota(\mu)(\vec{x}(\ell,t))$
\EndFor
\EndFor
\State \Return{ $\pcsts$ }
\EndCase
\Case{$\psi=\neg\psi_1$}
\State $\pcsts_1=\Call{Monitor}{\lserv,\vec{x},\psi_1}$
\State $\pcsts=[]$
\ForAll{ $\ell \in L$ } 
\ForAll{ $t \in \mathcal{T}(\pcsts_1(\ell))$ } 
\State $\pcsts(\ell,t)=\odot_{D_2} \pcsts_1(\ell,t)$
\EndFor 
\EndFor 
\State \Return{ $\pcsts$ }
\EndCase
\Case{$\psi=\psi_1 \wedge \psi_2$}
\State $\pcsts_1=\Call{Monitor}{\lserv,\vec{x},\psi_1}$
\State $\pcsts_2=\Call{Monitor}{\lserv,\vec{x},\psi_2}$
\State $\pcsts=[]$
\ForAll{ $\ell \in L$ } 
\ForAll{$t \in \mathcal{T}(\pcsts_1(\ell))\cup \mathcal{T}(\pcsts_2(\ell))$} 
\State $\pcsts(\ell,t)=\pcsts_1(\ell,t) \otimes_{D_2} \pcsts_2(\ell,t)$
\EndFor
\EndFor
\State \Return{ $\pcsts$ }
\EndCase
\Case{$\psi=\psi_1 \until{[t1,t2]} \psi_2$}
\State $\pcsts_1=\Call{Monitor}{\lserv,\vec{x},\psi_1}$
\State $\pcsts_2=\Call{Monitor}{\lserv,\vec{x},\psi_2}$
\State $\pcsts=[]$
\ForAll{ $\ell \in L$ } 
\State $\pcsts(\ell)=\Call{Until}{t1,t2,\pcsts_1(\ell),\pcsts_2(\ell)}$
\EndFor
\State \Return{ $\pcsts$ }
\EndCase
\Case{$\psi=\psi_1 \since{[t1,t2]} \psi_2$}
\State $\pcsts_1=\Call{Monitor}{\lserv,\vec{x},\psi_1}$
\State $\pcsts_2=\Call{Monitor}{\lserv,\vec{x},\psi_2}$
\State $\pcsts=[]$
\ForAll{ $\ell \in L$ } 
\State $\pcsts(\ell)=\Call{Since}{t1,t2,\pcsts_1(\ell),\pcsts_2(\ell)}$
\EndFor
\State \Return{ $\pcsts$ }
\EndCase
\Case{$\psi=\psi_1 \reach{[d_1,d_2]}{f} \psi_2$}
\State $\pcsts_1=\Call{Monitor}{\lserv,\vec{x},\psi_1}$
\State $\pcsts_2=\Call{Monitor}{\lserv,\vec{x},\psi_2}$
\State $\pcsts=[]$
\ForAll{ $t \in \mathcal{T}(\pcsts_1)\cup\mathcal{T}(\pcsts_2)$ } 
\State \rlap{$\pcsts@t=\Call{Reach}{\lserv(t),f,d_1,d_2,\pcsts_1@t,\pcsts_2@t}$}
\EndFor
\State \Return{ $\pcsts$ }
\EndCase
\Case{$\psi=\psi_1 \escape{d}{f} \psi_2$}
\State $\pcsts_1=\Call{Monitor}{\lserv,\vec{x},\psi_1}$
\State $\pcsts_2=\Call{Monitor}{\lserv,\vec{x},\psi_2}$
\State $\pcsts=[]$
\ForAll{ $t \in \mathcal{T}(\pcsts_1)\cup\mathcal{T}(\pcsts_2)$ } 
\State \rlap{$\pcsts@t=\Call{Escape}{\lserv(t), f, d,\pcsts_1@t,\pcsts_2@t}$}
\EndFor
\State \Return{ $\pcsts$ }
\EndCase
\EndFunction
\end{algorithmic}
\end{multicols}%
\end{algorithm}

Monitoring of temporal properties, namely $\varphi_1 \until{[t_{1}, t_{2}]}\varphi_2$ and $\varphi_1 \since{[t_{1}, t_{2}]} \varphi_2$, relies on functions \textproc{Until} and \textproc{Since}. These are defined  by using the same approach of~\cite{Donze2013} and~\cite{MalerN13}. However, while their monitoring relies on classical Boolean and arithmetic operators, here the procedure is parametrised with respect to operators $\oplus_{D_2}$ and $\otimes_{D_2}$ of the considered semiring.

\begin{algorithm}[tbp] 
\caption{Monitoring function for \emph{reach} operator}
\label{algo:reachmonitoring}
\vspace{1mm}
\begin{algorithmic}[1]
\Function{Reach}{$(L,\wfun)$, $f: A\rightarrow B$, $d_1\in B$, $d_2\in B$, $\ssign_1: L\rightarrow D_2$, $\ssign_2:L\rightarrow D_2$}
\If{$d_2\not=\top_{B}$}
\State{} \Return \Call{BoundedReach}{$(L,\wfun)$, $f$, $d_1$, $d_2$ , $\ssign_1$, $\ssign_2$}
\Else{}
\State{} \Return \Call{UnboundedReach}{$(L,\wfun)$, $f$, $d_1$, $\ssign_1$, $\ssign_2$}
\EndIf
\EndFunction
\end{algorithmic}
\end{algorithm}

To monitor $\varphi_{1} \: \reach{[d_{1},d_{2}]}{f:A\rightarrow B} \: \varphi_{2}$
 we first compute 
the signals $\mathbf{s}_1$ and $\mathbf{s}_2$, resulting from the monitoring of $\varphi_1$ and $\varphi_2$.  After that, the final result is obtained by aggregating the spatial signals $\mathbf{s}_1@t$ and $\mathbf{s}_2@t$ at each time $t\in \mathcal{T}(\mathbf{s}_1)\cup \mathcal{T}(\mathbf{s}_2)$ by using function \textproc{Reach} defined in Algorithm~\ref{algo:reachmonitoring}. In this function two cases are distinguished: $d_2\not=\top_{B}$ or $d_2=\top_{B}$. In the first case, the resulting monitoring value is calculated via function \textproc{BoundedReach} defined in Algorithm~\ref{algo:boundedreachmonitoring}. Conversely, when $d_2=\top_{B}$ monitoring is performed by relying on function \textproc{UnboundedReach} defined in Algorithm~\ref{algo:unboundedreachmonitoring}.

\noindent \paragraph{\bf  \textproc{BoundedReach}} Function \textproc{BoundedReach}, defined in Algorithm~\ref{algo:boundedreachmonitoring}, takes as parameters the spatial model $\langle L,\wfun \rangle$ at time $t$, the function $f:A\rightarrow B$, used to compute the distances over paths, and the interval $[d_1,d_2]$, describing the reachability bound.
The proposed algorithm is based on flooding that computes the output signal $\ssign$. 
At line $3$, we inizialite the set $Q$, it contains all triples  $(\ell, s_2[\ell], \bot_B)$, where $\bot_B$ is the minimum element of the distance domain $B$ (e.g. if $B=\mathbb{R}_{\geq 0}$, $\bot_B=0$).
Let us denote $Q_i$ the value of  $Q$ after $i$ iterations of loop starting at line $4$.
$Q_i$ contains a triple  $(\ell, v, d)$ if and only if there exists a path such that with  $i$-steps we reach a distance $d <_{B} d_2$ and a monitored value $v$. 
To compute the values in $Q_{i+1}$, for each element $(\ell,v,d)\in Q_i$, we consider the locations $\ell'$ next to $\ell$  at a distance $w$ ($\nextto{\ell'}{w}{\ell}$) and we compute the items: $v' = v\otimes \ssign_1(\ell')$ and $d' = d+_{B}f(w)$. The element $(\ell', v',d')$ is added to $Q_{i+1}$ if $d'<_{B} d_2$, i.e. if the sum of the current distance plus the distance between $\ell$ and $\ell'$ is still less than $d_2$.
When  $d' \in [d_1,d_2]$, $\ssign(\ell')$ is updated to take into account the new computed value. We recal that $s$ stores the value of the semantics of the reach operator.

\begin{algorithm}[tbp] 
\caption{Monitoring bounded reachability}
\label{algo:boundedreachmonitoring}
\vspace{1mm}
\begin{algorithmic}[1]
\Function{BoundedReach}{$(L,\wfun)$, $f: A\rightarrow B$, $d_1\in B$, $d_2\in B$, $\ssign_1: L\rightarrow D_2$, $\ssign_2:L\rightarrow D_2$} 
\State $\forall \ell\in L. \ssign[\ell] = \left\{\begin{array}{ll}
\ssign_2[\ell] & d_1=\bot_{B}\\
\bot_{D_2} & \mbox{otherwise}
\end{array}\right.
$
\State $Q=\{ (\ell, \ssign_2[\ell], \bot_{B}) | \ell\in L \}$
\While{ $Q\not=\emptyset$ }
\State{$Q'=\emptyset$}
\ForAll{$(\ell,v,d) \in Q$} 
\ForAll{ $\ell': \nextto{\ell'}{w}{\ell}$ }
\State $v'=v~\otimes_{D_2}~\ssign_1[\ell']$
\State $d'=d~+_{B}~f(w)$
\If{$(d_1\leq d'\leq d_2)$}
\State $\ssign[\ell'] = \ssign[\ell']\oplus_{D_2} v'$
\EndIf
\If{$d'< d_2$}
\If{$\exists (\ell',v'',d')\in Q'$} 
\State $Q' = (Q'-\{ (\ell',v'',d') \})\cup \{ (\ell',v'\oplus_{D_2} v'',d') \}$
\Else{}
\State $Q' = Q'\cup \{ (\ell',v',d') \}$
\EndIf
\EndIf
\EndFor
\EndFor
\State{$Q=Q'$}
\EndWhile
\State \Return{ $\ssign$}
\EndFunction
\end{algorithmic}
\end{algorithm}

\noindent \paragraph{\bf  \textproc{UnboundedReach}} Function  \textproc{UnboundedReach} defined in Algorithm~\ref{algo:unboundedreachmonitoring} is used when the interval in the \emph{reach} formula is unbounded. In this case the function takes as parameters the spatial model $\langle L,\wfun\rangle $ at time $t$, the function $f:A\rightarrow B$, used to compute the distances over paths, and the lower bound $d_1$, and returns a \emph{spatial signal} $\ssign$. If  $d_1 = \bot_B$, i.e. when we are considering a totally unbound reach with no constraints, we initialize $\ssign=\ssign_2$. Otherwise, when $d_1\not=\bot$, we have first to call function \textproc{boundedReach} by passing as upper bound $d_1+d_{max}$, where $d_{max}$ is the max value that function $f$ can associate to a single edge in $\wfun$.
After that, $\ssign$ will contain the \emph{reachability} value computed up to the bound $[d_1,d_1+d_{max}]$ (lines (5)-(6)). 
Hence, the computed values are \emph{back propagated} until a fixpoint is reached. 
This will guarantee that for each location, only the values of $\ssign_2$ at a path distance $[d_1,\top_{B}]$ are considered in the computation of reachability values.

%


\begin{algorithm}[tbp] 
\caption{Monitoring unbounded reachability}
\label{algo:unboundedreachmonitoring}
\vspace{1mm}
\begin{algorithmic}[1]
\Function{UnboundedReach}{$(L,\wfun)$, $f: A\rightarrow B$, $d_1\in B$, $\ssign_1: L\rightarrow D_2$, $\ssign_2:L\rightarrow D_2$}
\If{$d_1=\bot_{B}$}
\State $\ssign=\ssign_2$
\Else
\State $d_{max}=\max\{ f(w) | \exists.\ell,\ell'\in L: \nextto{\ell}{w}{\ell'} \}$
\State $\ssign=\Call{BoundedReach}{(L,\wfun), f, d_1, d_1+_{B} d_{max},\ssign_1,\ssign_2}$
\EndIf
\State $T=L$
\While{$T\not=\emptyset$}
\State $T'=\emptyset$
\ForAll{$\ell\in T$}
\ForAll{ $\ell': \nextto{\ell'}{w}{\ell}$ }
\State $v' = (\ssign[\ell]\otimes_{D_2} \ssign_1[\ell'])\oplus_{D_2} \ssign[\ell']$
\If{$v'\not= \ssign[\ell']$}
\State{$\ssign[\ell']=v'$}
\State $T'=T'\cup\{ \ell' \}$
\EndIf
\EndFor
\EndFor
\State $T=T'$
\EndWhile
\State \Return{ $\ssign$}
\EndFunction
\end{algorithmic}
\end{algorithm} 

\noindent \paragraph{\bf  \textproc{Escape}} Monitoring algorithm for $\escape{[d_{1},d_{2}]}{f:A\rightarrow B}{\varphi}$ is reported in Algorithm~\ref{algo:escapemonitoring}. Given a  spatial model $\langle L,\wfun\rangle $ at time $t$, a distance function $f:A\rightarrow B$, an interval $[d_1,d_2]$, it computes the spatial signal representing the monitoring value of $\escape{d}{f} \varphi$ at time $t$. 
Function $\mathsf{escape}$ first computes the \emph{matrix distance} $D$ (line 2), obtained from the given space model and distance function $f$. After that, a matrix $e:L\times L\rightarrow D_2$ is computed. 
The matrix $e$ is initialised so that all the elements $e[\ell,\ell]$ in the diagonal are equal to $\ssign_1(\ell)$, while all the other elements are set to $\bot_{D_2}$ (lines 3-4).
After that, iteratively, elements of $e$ are updated by considering the values in the neighbours in each location (lines 6-20). A value $e[\ell_1',\ell_2]$ is updated iff  
$\ssign_1(\ell_1') \otimes_{D_2} e[\ell_1,\ell_2] >_{D_2}  e[\ell_1',\ell_2] $, where $\ell_1'$ is a neighbor of $\ell_1$.
The updates end when a fixpoint is reached.\footnote{We prove that the loop always terminates in Lemma~\ref{lemma:escapecorrectness}.}
At the end of the loop computation, the element $e[\ell_1,\ell_2]$ contains the \emph{escape} value from $\ell_1$ to $\ell_2$, defined in the semantics without the distance constraint. This latter is took into consideration in line 23, where the final monitored value $s$ is computed. For each $\ell$, the equation $\bigoplus_{D_2}(\{ e[\ell,\ell'] | D[\ell,\ell']\in [d_1,d_2] \})$ considers the minimum values $e[\ell,\ell']$ of all $\ell'$ that satisfies the distance contraint, i.e. such that $ D[\ell,\ell']\in [d_1,d_2]$.



\begin{algorithm}[tbp] 
\caption{Monitoring \emph{escape}}
\label{algo:escapemonitoring}
\vspace{1mm}
\begin{algorithmic}[1]
\Function{Escape}{$(L,\wfun)$, $f: A\rightarrow B$, $d_1\in B$,$d_2\in B$, $\ssign_1: L\rightarrow D_2$} 
\State $D = \Call{MinDistance}{L,\wfun,f)}$
\State $\forall \ell,\ell'\in L. e[\ell,\ell'] = \bot_{D_2}$
\State $\forall \ell\in L. e[\ell,\ell]=\ssign_1(\ell)$
\State $T=\{ (\ell,\ell) | \ell\in L \}$
\While{ $T\not= \emptyset$ }
\State $e'=e$ 
\State $T'=\emptyset$
\ForAll{ $(\ell_1,\ell_2) \in T$ }
\ForAll{ $\ell_1': \nextto{\ell_1'}{w}{\ell_1}$ }
\State{ $v = e[\ell_1',\ell_2]\oplus_{D_2}(\ssign_1(\ell_1') \otimes_{D_2} e[\ell_1,\ell_2])$}
\If{$v\not=e[\ell_1',\ell_2]$}
\State{$T'=T'\cup \{ (\ell_1',\ell_2) \}$}
\State{$e'[\ell_1',\ell_2]=v$}
\EndIf
\EndFor
\EndFor
\State{T=T'}
\State{e=e'}
\EndWhile
\State $\ssign=[]$
\ForAll{ $\ell\in L$ }
\State $\ssign(\ell)=\bigoplus_{D_2}(\{ e[\ell,\ell'] | D[\ell,\ell']\in [d_1,d_2] \})$
\EndFor{}
\State \Return $\ssign$
\EndFunction
\end{algorithmic}
\end{algorithm}

\subsection{Correctness}
In this subsection, we discuss the correctness of the algorithms.
\begin{lemma}[BoundedReach correctness]
\label{lemma:boundreachcorrectness)}
Given an $A$-spatial model $\lserv=(L,\wfun)$, a function $f: A\rightarrow B$, an interval $[d_1, d_2]$ (with $d_1,d_2\in B$, $d_1\leq_{B} d_2$ and $d_2\not=\top_{B}$), and two spatial signals $\ssign_1: L\rightarrow D_2$, $\ssign_2:L\rightarrow D_2$ such that $\Call{BoundedReach}{(L,\wfun),f,d_1,d_2,\ssign_1,\ssign_2}=\ssign$, for any $\ell\in L$, we have that:

\[
\ssign(\ell)={\oplus_{D_2}}_{\tau\in Routes(\lserv,\ell)}
        ~~{ \oplus_{D_2}}_{i : \left(d_{\tau}^{f}[i] \in [d_{1},d_{2}]\right)}
        \left(
            \ssign_2(\tau[i])
            \otimes_{D_{2}}
            {\otimes_{D_2}}_{j < i} 
                \ssign_1(\tau[j])     
        \right)
\]

\end{lemma}

\begin{proof}
Let us denote by $\ssign_{i}$ and $Q_i$ the value of variables $\ssign$ and $Q$ respectively in Algorithm~\ref{algo:boundedreachmonitoring} after $i$ iterations of the \emph{while-loop} at line $(4)$. Considering that, since $f(w)>0$ for any $w\in A$, the algorithm terminates after a finite number of iterations, the statement follows directly from the following properties:
\begin{enumerate}[align=left]
    \item[$P1$:] if $(\ell,v,d)\in Q_i$ then $d\leq_{B} d_2$;
    \item[$P2$:] if $(\ell,v_1,d)\in Q_i$ and $(\ell,v_2,d)\in Q_i$ then $v_1=v_2$;
    \item[$P3$:] $(\ell,v,d)\in Q_i$ if and only if 
    \[ 
    v={\oplus_{D_2}}_{\tau\in Routes(\lserv,\ell): d_{\tau}^{f}[i]=d}
            \ssign_2(\tau[i])
            \otimes_{D_{2}}
        \left(
            {\otimes_{D_2}}_{j < i} 
                \ssign_1(\tau[j])             
        \right)
\]
\item[$P4$:] for any $\ell\in L$:
\[
\ssign_i[\ell]={\oplus_{D_2}}_{\tau\in Routes(\lserv,\ell)}
        ~~{ \oplus_{D_2}}_{k : k\leq i\wedge \left(d_{\tau}^{f}[k] \in [d_{1},d_{2}]\right)}
        \left(
            \ssign_2(\tau[k])
            \otimes_{D_{2}}
            {\otimes_{D_2}}_{j < k} 
                \ssign_1(\tau[j])             
        \right)
\]
\end{enumerate}

Properties $P1$ and $P2$ are direct consequences of instructions at lines $(3)$ and $(13)-(19)$ of Algorithm~\ref{algo:boundedreachmonitoring}. Indeed, $(\ell,v,d)\in Q_0$ if and only if $d=\bot_{B}$ (line $(3)$), while $(\ell,v,d)$ is inserted in $Q'=Q_{i+1}$ if and only if $d<_{B} d_2$ (line $(13)$) and no other $(\ell,v',d)$ is already in $Q'$ (lines $(14)-(18)$).

Property $P3$ can be proved by induction on $i$ by observing that the property is satisfied by $Q_0$ and that for any $i$:

\[
(\ell',v',d')\in Q_{i+1} \Leftrightarrow d'<_{B} d_2\mbox{ and }
v'={\oplus_{D_2}}_{(\ell,v,d)\in Q_i:\nextto{\ell'}{w}{\ell}\wedge d+f(w)=d'} 
        \left(
            \ssign_1(\ell')
            \otimes_{D_2}
            v
        \right)
\]

From the above, and from inductive hypothesis, we have that:
\[ \small
\begin{array}{r@{\,}c@{\,}l}
v' &=& {\oplus_{D_2}}_{(\ell,v,d)\in Q_i:\nextto{\ell'}{w}{\ell}\wedge d+f(w)=d'} 
        \left(
            \ssign_1(\ell')
            \otimes_{D_2}
            {\oplus_{D_2}}_{\tau\in Routes(\lserv,\ell): d_{\tau}^{f}[i]=d}
            \ssign_2(\tau[i])
            \otimes_{D_{2}}
        \left(
            {\otimes_{D_2}}_{j < i} 
                \ssign_1(\tau[j])             
        \right)
        \right)\\
        & = & 
            {\oplus_{D_2}}_{\tau\in Routes(\lserv,\ell'): d_{\tau}^{f}[i+1]=d'}
            \ssign_2(\tau[i+1])
            \otimes_{D_{2}}
        \left(
            {\otimes_{D_2}}_{j < i+1} 
                \ssign_1(\tau[j])             
        \right)
\end{array}
\]
That is the statement of $P3$.

Finally, we can probe $P4$ by induction on $i$ by using $P3$ and by observing that:
\[
\ssign_{i+1}[\ell']=
{\oplus_{D_2}}_{(\ell,v,d)\in Q_i:\nextto{\ell'}{w}{\ell}\wedge d+f(w)\in [d_1,d_2]} \ssign_{i}[\ell']\oplus_{D_2}\left(
\ssign_1[\ell'] \otimes v\right)
\tag*{\qedhere}
\]
\end{proof}

\begin{lemma}[UnboundedReach correctness]
\label{lemma:unboundreachcorrectness}
Given an $A$-spatial model $(L,\wfun)$, a function $f: A\rightarrow B$, a value $d_1\in B$ ($d_1\not= \top_{B}$), and two spatial signals $\ssign_1: L\rightarrow D_2$, $\ssign_2:L\rightarrow D_2$ such that $\Call{UnboundedReach}{(L,\wfun),f,d_1,\ssign_1,\ssign_2}=\ssign$, for any $\ell\in L$, we have that:

\[
\ssign(\ell)={\oplus_{D_2}}_{\tau\in Routes(\lserv(t),\ell)}
        ~~{ \oplus_{D_2}}_{\ell'\in\tau : \left(d_{\tau}^{f}[\ell'] \geq  d_{1}\right)}
        \left(
            \ssign_2(\ell')
            \otimes_{D_{2}}
            {\otimes_{D_2}}_{j < \tau(\ell')} 
                \ssign_1(\tau[j])             
        \right)
\]

\end{lemma}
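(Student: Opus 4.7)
The proof follows the template of Lemma~\ref{lemma:boundreachcorrectness}, establishing invariants for the fix-point loop at lines (8)--(19) of Algorithm~\ref{algo:unboundedreachmonitoring} and combining them with Lemma~\ref{lemma:boundreachcorrectness} applied to the seed spatial signal computed at line (6). I split the argument into two cases according to the conditional at line (2). For $d_{1} = \bot_{B}$, the seed $\ssign = \ssign_{2}$ already accounts for length-zero routes (where the starting location itself plays the role of the witness), and the back-propagation extends this to routes of arbitrary length; the distance constraint $d_{\tau}^{f}[\ell'] \geq \bot_{B}$ is vacuous here, so the target formula simplifies to the standard ``$\ssign_{1}$-until-$\ssign_{2}$'' reach semantics over the whole graph, which is exactly what the back-propagation computes.

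The substantive case is $d_{1} \neq \bot_{B}$. Here Lemma~\ref{lemma:boundreachcorrectness} characterises the seed after line (6): for each location $\ell_{k}$, its seed value $\ssign_{\mathsf{init}}[\ell_{k}]$ equals the $\bigoplus_{D_{2}}$ over routes $\tau'$ originating at $\ell_{k}$ and indices $i$ with $d_{\tau'}^{f}[i] \in [d_{1}, d_{1}+d_{max}]$ of the contribution $\ssign_{2}(\tau'[i]) \otimes_{D_{2}} \bigotimes_{j<i}\ssign_{1}(\tau'[j])$. The next step is to prove the following loop invariant: after each iteration, $\ssign[\ell]$ equals the $\bigoplus_{D_{2}}$, over all triples consisting of a prefix $\ell = \ell_{0} \to \cdots \to \ell_{k}$ (of length bounded by the iteration count), a suffix route $\tau'$ originating at $\ell_{k}$, and a witness index $i$ with $d_{\tau'}^{f}[i] \in [d_{1}, d_{1}+d_{max}]$, of the product $\bigotimes_{h<k}\ssign_{1}(\ell_{h}) \otimes \ssign_{\mathsf{init}}[\ell_{k}]$. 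This invariant follows by induction on the iteration count using the update rule $v' = (\ssign[\ell] \otimes_{D_{2}} \ssign_{1}[\ell']) \oplus_{D_{2}} \ssign[\ell']$ together with idempotence of $\oplus_{D_{2}}$; termination follows because values are monotonically non-decreasing in $\sqsubseteq_{D_{2}}$ and only finitely many distinct values arise from the finite input.

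At the fixed point one must identify the above prefix/suffix concatenation with an arbitrary route $\tau$ from $\ell$ reaching a witness $\tau[i]$ at distance $\geq d_{1}$. The key ingredient is a decomposition claim: for any such $\tau$ and index $i$, there exists $k^{*}$ with $0 \leq k^{*} \leq i$ such that $d_{\tau}^{f}[i] - d_{\tau}^{f}[k^{*}] \in [d_{1}, d_{1}+d_{max}]$. One chooses $k^{*}$ as the largest index at which the remaining distance to $\tau[i]$ is still at least $d_{1}$; by definition of $d_{max}$ as the maximum single-edge weight in $\lserv$, this remaining distance cannot exceed $d_{1} + d_{max}$. The suffix $\tau[k^{*}] \to \cdots \to \tau[i]$ is then captured by the seed at $\ell_{k} = \tau[k^{*}]$, while the prefix $\tau[0] \to \cdots \to \tau[k^{*}]$ is captured by back-propagation. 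A careful bookkeeping shows that the $\ssign_{1}$ factors align: back-propagation applies $\ssign_{1}$ at $\tau[0], \ldots, \tau[k^{*}-1]$, while Lemma~\ref{lemma:boundreachcorrectness} contributes $\ssign_{1}(\tau[k^{*}]) \otimes \cdots \otimes \ssign_{1}(\tau[i-1])$ together with $\ssign_{2}(\tau[i])$, yielding exactly the target term $\ssign_{2}(\tau[i]) \otimes \bigotimes_{j<i}\ssign_{1}(\tau[j])$.

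The main obstacle is showing that the fix-point aggregation matches the target semantics without over-counting: a single target route $\tau$ may admit several valid split indices $k^{*}$, each contributing the same product. Idempotence of $\oplus_{D_{2}}$ (from the constraint semiring axioms) is precisely what collapses these duplicate contributions so that the algorithmic aggregation and the semantic aggregation coincide. The converse direction (every contribution accumulated by the algorithm corresponds to a genuine witness-route at distance $\geq d_{1}$) is immediate, since concatenating prefix and suffix yields a total distance that is at least the suffix distance, hence at least $d_{1}$.
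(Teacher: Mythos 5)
Your proof is correct and takes essentially the same route as the paper's: both case-split on $d_1=\bot_B$, characterise the seed after line (6) via Lemma~\ref{lemma:boundreachcorrectness} on the window $[d_1,d_1+d_{max}]$, and read lines (8)--(19) as a back-propagation fixpoint iteration. The paper compresses everything after that into ``the statement follows by applying standard algebra,'' whereas you actually carry out that step --- the choice of the split index $k^{*}$ putting the suffix distance into $[d_1,d_1+d_{max}]$, the alignment of the $\ssign_1$ factors across prefix and suffix, and the appeal to idempotence of $\oplus_{D_2}$ to absorb duplicate splits --- so your write-up is strictly more complete than the paper's.
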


\begin{proof}
Directly from the pseudo code in Algorithm~\ref{algo:unboundedreachmonitoring} and from Lemma~\ref{lemma:unboundreachcorrectness}, we can observe that the value $\ssign$ computed by function $\Call{UnboundedReach}{}$ is the limit ($\ssign = \lim_{i\rightarrow\infty }\ssign_{i}
$) of the sequence of signals $\ssign_i$ such that for any $\ell\in L$:
\[
\ssign_{i+1}[\ell]=\bigoplus_{\ell\in L:\nextto{\ell}{w}{\ell'}} (\ssign_{i}(\ell)\oplus \ssign_1(\ell')\otimes \ssign_{i}(\ell'))
\]
The initial spatial signal is $\ssign_{0}=\ssign_2$, if $d_1=\bot_{B}$, while it is:
\[
\ssign_0[\ell]={\oplus_{D_2}}_{\tau\in Routes(\lserv,\ell)}
        ~~{ \oplus_{D_2}}_{i : \left(d_{\tau}^{f}[i] \in [d_{1},d_{1}+d_{max}]\right)}
        \left(
            \ssign_2(\tau[i])
            \otimes_{D_{2}}
            {\otimes_{D_2}}_{j < i} 
                \ssign_1(\tau[j])             
        \right)
\]
\noindent
when $d_1\not=\bot_{B}$ and $d_{max}=\max\{ f(w) | \exists.\ell,\ell'\in L: \nextto{\ell}{w}{\ell'} \}$. In both the cases, the statement follows by applying standard algebra and the properties of $\oplus$ and $\otimes$.
\end{proof}

\begin{lemma}[Escape correctness]
\label{lemma:escapecorrectness}
Given an $A$-spatial model $(L,\wfun)$, a function $f: A\rightarrow B$, an interval $[d_1, d_2]$ (with $d_1,d_2\in B$, $d_1\leq_{B} d_2$ and) $d_2\not=\top_{B}$), and a spatial signal $\ssign_1: L\rightarrow D_2$ such that $\Call{Escape}{(L,\wfun),f,d_1,d_2,\ssign_1}=\ssign$, for any $\ell\in L$, we have that:

\[
\ssign(\ell)={\bigoplus_{D_2}}_{\tau\in Routes((L,\wfun),\ell)}
        ~~{\bigoplus_{D_2}}_{\ell'\in \tau : \left(d_{(L,\wfun)}^{f}[\ell,\ell'] \in [d_{1},d_{2}]\right)}
            ~~{\bigotimes_{D_2}}_{i \leq \tau(\ell')} \ssign_1(\tau[i])
\]

\begin{proof}
Let us denote by $e_{i}$ the content of data structures $e$ after $i$ iterations of the loop at line $(6)$ of Algorithm~\ref{algo:escapemonitoring}.
We have only to prove the following properties:

\begin{itemize}[align=left]
    \item[$P1$] For any $\ell_1,\ell_2$, $D[\ell_1,\ell_2]=d_{(L,\wfun)}^{f}[\ell,\ell']$ 
    \item[$P2$] For any $i$:

\[
e_{i}[\ell_1,\ell_2]={\bigoplus_{D_2}}_{\tau\in Routes((L,\wfun),\ell)}
            ~~{\bigotimes_{D_2}}_{j \leq \tau(\ell')\wedge j\leq i} \ssign_1(\tau[j])
\]

    \item[$P3$] The loop terminates after at most $k=|L|$ iterations and 
    
\[
e_{k}[\ell_1,\ell_2]={\bigoplus_{D_2}}_{\tau\in Routes(\lserv(t),\ell)}
            ~~{\bigotimes_{D_2}}_{j \leq \tau(\ell')} \ssign_1(\tau[j])
\]

\end{itemize}

Property $P1$ follows directly from definition of $d_{(L,\wfun)}^{f}[\ell,\ell']$ and from the fact that $\mathsf{MinDistance(L,\wfun,f)}$ computes the matrix of min distances computed in $\langle L,\wfun\rangle$ via $f$. Property $P2$ can be proved by induction on $i$  and follows directly from the code of Algorithm~\ref{algo:escapemonitoring}. Finally, $P3$ is a consequence of the fact that after at most $|L|$ iterations a fix point is reached since all the locations have been taken into account.  
We can conclude that the statement of the lemma follows directly from properties $P1$, $P2$ and $P3$ above by observing that:

\[
\begin{array}[b]{rcl}
\ssign(\ell) & = & \bigoplus_{D_2}(\{ e[\ell,\ell'] | D[\ell,\ell']\in [d_1,d_2] \}) \\
& = & \bigoplus_{D_2}(\{ e_{k}[\ell,\ell'] | D[\ell,\ell']\in [d_1,d_2] \}) \\
& = & {\bigoplus_{D_2}}_{\tau\in Routes(\lserv(t),\ell)}
    ~~{\bigoplus_{D_2}}_{\ell'\in \tau : \left(d_{(L,\wfun)}^{f}[\ell,\ell'] \in [d_{1},d_{2}]\right)}
            ~~{\bigotimes_{D_2}}_{j \leq \tau(\ell')} \ssign_1(\tau[j])
\end{array} 
\tag*{\qedhere}
\]

\end{proof}
\end{lemma}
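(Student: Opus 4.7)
The plan mirrors the three-step structure used for Lemma~\ref{lemma:boundreachcorrectness}: first, establish that the matrix $D$ computed on line~(2) satisfies $D[\ell_1,\ell_2]=d^{f}_{(L,\wfun)}[\ell_1,\ell_2]$, which I would take as a standard correctness property of the \textsc{MinDistance} subroutine; second, characterise the matrix $e$ after $i$ iterations of the while-loop at line~(6) as the $\oplus_{D_2}$-aggregation of prefix-$\otimes_{D_2}$-contributions over routes whose index at $\ell_2$ is at most $i$; third, combine these two facts through the final selection loop on lines~(22)--(24) to obtain the formula in the statement.

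The core of the argument is the inductive characterisation of $e$. Writing $e_i$ for the value of the matrix after $i$ iterations, I would prove by induction on $i$ that
\[
e_i[\ell_1,\ell_2] \;=\; {\bigoplus_{D_2}}_{\tau\in Routes((L,\wfun),\ell_1):\ \ell_2\in\tau,\ \tau(\ell_2)\leq i} \;\;{\bigotimes_{D_2}}_{j\leq \tau(\ell_2)} \ssign_1(\tau[j]).
\]
The base case is immediate from the initialisation in lines~(3)--(4): only $e_0[\ell,\ell]=\ssign_1(\ell)$ is non-$\bot_{D_2}$, corresponding to the trivial route $\ell$. For the inductive step, the update at line~(11) prescribes that for every edge $\nextto{\ell_1'}{w}{\ell_1}$ the value $\ssign_1(\ell_1')\otimes_{D_2} e_i[\ell_1,\ell_2]$ is $\oplus_{D_2}$-combined with the previously stored $e_i[\ell_1',\ell_2]$. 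This is exactly the aggregation over routes of length $i{+}1$ from $\ell_1'$ to $\ell_2$ (obtained by prepending $\ell_1'$ to a witness route of length $i$ from $\ell_1$ to $\ell_2$) unioned with those of length $\leq i$ already recorded.

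For termination, I would invoke the idempotency of $\otimes_{D_2}$, which holds in both signal domains considered in the paper ($\wedge$ on Booleans and $\min$ on $\mathbb{R}^{\infty}$): any non-simple route can be reduced to a simple sub-route with the same $\otimes_{D_2}$-aggregation, so the value attained over all routes is already realised by simple routes of length at most $|L|$, and the loop reaches a fixpoint within $|L|$ iterations. Plugging the resulting unconstrained characterisation of $e[\ell_1,\ell_2]$ together with the correctness of $D$ into line~(23) directly yields the stated equation. The principal obstacle I anticipate is precisely the termination-and-saturation step: one must carefully argue that the ``dirty set'' $T$ tracks entries where a further $\oplus_{D_2}$-update could still change $e$, and that idempotency of $\otimes_{D_2}$ together with finiteness of $L$ is what guarantees that contributions from arbitrary (possibly non-simple) routes are already captured after $|L|$ iterations; by contrast, the inductive invariant on $e_i$ is a routine unfolding once the bookkeeping on prefix indices is set up correctly.
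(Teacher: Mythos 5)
Your proposal is correct and follows essentially the same route as the paper's proof: correctness of \textsc{MinDistance} (the paper's $P1$), an inductive invariant characterising $e_i$ as the aggregation over routes reaching $\ell_2$ within $i$ steps (the paper's $P2$), termination and saturation after at most $|L|$ iterations (the paper's $P3$), and the final combination through the selection loop. If anything, your treatment of termination is more explicit than the paper's --- you correctly identify that idempotency of the semiring operations is what lets simple routes of length at most $|L|$ realise the value over all routes, a point the paper only asserts.
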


\begin{theorem}
Given a dynamical spatial model $\lserv$, a trace $\vec{x}$ 
and a formula $\phi$, we have that:

\[
\Call{Monitor}{\lserv,\vec{x},\phi}=\pcsts_{\phi}^{\lserv,\vec{x}}
\]

\end{theorem}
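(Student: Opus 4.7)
The plan is to prove the theorem by structural induction on the formula $\phi$, leveraging the three correctness lemmas (for \textproc{BoundedReach}, \textproc{UnboundedReach}, and \textproc{Escape}) that have already been established in this section. Since the monitoring function is defined by cases on the top-level operator of $\phi$, induction on the syntax aligns naturally with the recursive structure of Algorithm~\ref{algo:part1}.

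For the base case, $\phi = \mu$ is an atomic proposition, and the correspondence with $\pcsts^{\lserv,\vec{x}}_{\mu}$ is immediate from the definition of the signal interpretation function $\iota$ applied pointwise, matching the first clause of the semantics in Table~\ref{tab:monitoring}. For the Boolean cases $\neg \psi_1$ and $\psi_1 \wedge \psi_2$, I would apply the inductive hypothesis to obtain $\pcsts_1 = \pcsts^{\lserv,\vec{x}}_{\psi_1}$ (and $\pcsts_2 = \pcsts^{\lserv,\vec{x}}_{\psi_2}$), and then observe that the algorithm applies $\odot_{D_2}$ and $\otimes_{D_2}$ pointwise, which is exactly what the semantics prescribes; here I need to check that iterating over $\mathcal{T}(\pcsts_1(\ell)) \cup \mathcal{T}(\pcsts_2(\ell))$ correctly captures every time step where the combined piecewise-constant signal may change value, which follows from the definition of piecewise-constant signals and the fact that $\otimes_{D_2}$ and $\odot_{D_2}$ preserve constancy between transition points.

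For the temporal operators $\until{[t_1,t_2]}$ and $\since{[t_1,t_2]}$, the per-location subroutines \textproc{Until} and \textproc{Since} follow the standard constructions of~\cite{Donze2013,MalerN13}, adapted to arbitrary semiring operators $\oplus_{D_2}, \otimes_{D_2}$. I would invoke these existing correctness results, noting that the adaptation is mechanical: the proofs rely only on the algebraic identities of $\oplus_{D_2}, \otimes_{D_2}$ (associativity, commutativity, absorption), which hold by definition of the signal domain. Since temporal operators are evaluated independently at each location, composing the per-location result with the inductive hypothesis yields the full spatio-temporal signal. For the spatial operators $\psi_1 \reach{[d_1,d_2]}{f} \psi_2$ and $\escape{[d_1,d_2]}{f} \psi$, I would use the inductive hypothesis to identify the inputs $\pcsts_1@t, \pcsts_2@t$ passed to \textproc{Reach} (respectively \textproc{Escape}) at time $t$ with the correct spatial signals $\pcsts^{\lserv,\vec{x}}_{\psi_i}@t$, and then directly invoke Lemma~\ref{lemma:boundreachcorrectness}, Lemma~\ref{lemma:unboundreachcorrectness}, or Lemma~\ref{lemma:escapecorrectness} (depending on whether $d_2 = \top_B$) to conclude equality with the semantics.

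The main obstacle I anticipate is the bookkeeping on the time domain $\mathcal{T}(\pcsts)$: one needs to show that evaluating the spatial operators only at time points in $\mathcal{T}(\pcsts_1) \cup \mathcal{T}(\pcsts_2)$ is sufficient to reconstruct the full piecewise-constant signal $\pcsts^{\lserv,\vec{x}}_{\phi}$, which requires a separate argument that between consecutive points of $\mathcal{T}(\pcsts_1) \cup \mathcal{T}(\pcsts_2)$ the spatial signals $\pcsts_1@t$ and $\pcsts_2@t$ (and, crucially, the dynamic topology $\lserv(t)$) remain constant, so that the monitoring output does too. This tacit assumption on $\lserv$ being piecewise-constant on the same partition is needed for soundness; once made explicit, the argument closes by combining the inductive hypothesis with the three spatial correctness lemmas.
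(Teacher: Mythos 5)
Your proposal takes essentially the same approach as the paper: the paper's entire proof is a one-sentence appeal to structural induction on $\phi$ together with Lemmas~\ref{lemma:boundreachcorrectness}, \ref{lemma:unboundreachcorrectness} and~\ref{lemma:escapecorrectness}, which is precisely the skeleton you describe. The only difference is that you fill in the details the paper compresses away --- notably the bookkeeping showing that evaluating at the points of $\mathcal{T}(\pcsts_1)\cup\mathcal{T}(\pcsts_2)$ suffices, and the tacit assumption that the dynamical model $\lserv$ is piecewise constant on the same partition --- which is a legitimate gap in the paper's exposition rather than in your argument.
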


\begin{proof}
The proof easily follows by induction on $\phi$ by using Lemma~\ref{lemma:boundreachcorrectness)},
Lemma~\ref{lemma:unboundreachcorrectness}, and 
Lemma~\ref{lemma:escapecorrectness}.
\end{proof}

\subsection{Complexity}
In this subsection, we discuss the complexity of each algorithm.

\begin{proposition}[BoundedReach complexity]
\label{prop:reachboundcomplexity}
Given an $A$-spatial model $\lserv=(L,\wfun)$, a function $f: A\rightarrow B$, an interval $[d_1, d_2]$ (with $d_1,d_2\in B$, $d_1\leq_{B} d_2$ and) $d_2\not=\top_{B}$), and two spatial signals $\ssign_1: L\rightarrow D_2$, $\ssign_2:L\rightarrow D_2$ such that $\Call{BoundedReach}{(L,\wfun),f,d_1,d_2,\ssign_1,\ssign_2}=\ssign$, we define $d_{min}=\min_{(\ell, w, \ell') \in \wfun}(f(w))$ and $k=\min\{ i | i*d_{min} > d_{2} \}$, where $i*d_{min}$ indicates the sum of $i$ copies of $d_{min}$,  then the algorithm terminates after $O(k\cdot \beta_{d_2}\cdot|L|\cdot m)$ steps, where $m$ is the number of edges and $\beta_{d_2}$ is an integer counting the \emph{different distances} accumulated after $k$ steps.\footnote{ This value is in practice a constant and depends on the weights associated with edges and on the bound $d_2$. For instance, in the case of function $hop$ of Example~\ref{ex:distancefunction}, $\beta=1$. In general, $\beta_{d_2}$ has the same order of magnitude as $k$ considered above.}
\end{proposition}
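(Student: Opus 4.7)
The plan is to bound (i) the number of iterations of the outer \emph{while-loop} at line $(4)$, (ii) the maximum size of the working set $Q$ at any iteration, and (iii) the cost of the body of the loop processing a single triple $(\ell,v,d)\in Q$. Multiplying these three quantities will yield the stated bound.

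First, I would prove that the outer loop executes at most $k$ times. Observe from lines $(3)$ and $(8)$–$(9)$ that the distance component $d$ of every triple inserted into $Q_{i+1}$ has the form $d + f(w)$ for some edge weight $w$, and that $(\ell',v',d')$ is inserted only when $d' <_{B} d_{2}$ (line $(13)$). Since $Q_{0}$ contains only triples with distance $\bot_{B}$ and each iteration strictly increases the distance component by at least $d_{\min}$, a straightforward induction shows that every triple in $Q_{i}$ satisfies $d \geq_{B} i\cdot d_{\min}$. By definition of $k$ we have $k\cdot d_{\min} >_{B} d_{2}$, so $Q_{k}=\emptyset$ and the loop terminates in at most $k$ iterations.

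Next, I would bound $|Q_{i}|$. Property $P2$ in the proof of Lemma~\ref{lemma:boundreachcorrectness} asserts that no two triples in $Q_{i}$ share the same location and distance, so $|Q_{i}|$ is at most the number of pairs $(\ell,d)$ with $\ell\in L$ and $d$ an achievable accumulated distance $\leq_{B} d_{2}$. By the very definition of $\beta_{d_{2}}$ as the number of distinct distance values reachable within $k$ steps, there are at most $\beta_{d_{2}}$ admissible values of $d$ per location, so $|Q_{i}|\leq \beta_{d_{2}}\cdot |L|$.

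For step (iii), I would analyse the cost of the inner $\mathbf{for}$-loops at lines $(6)$–$(20)$. Processing a triple $(\ell,v,d)$ iterates over the incoming neighbours of $\ell$; summed over all triples in $Q_{i}$ and exploiting property $P2$, the total number of edge explorations per outer iteration is bounded by $\beta_{d_{2}}\cdot m$. Each edge exploration requires a constant number of semiring operations (lines $(8)$–$(12)$) plus the membership test and update of $Q'$ at lines $(13)$–$(19)$. The latter, in a naive implementation that linearly scans $Q'$, costs $O(|Q'|) = O(|L|\cdot \beta_{d_{2}})$, but since the dominating factor cited in the proposition is $|L|$, I would adopt an implementation where this test is $O(|L|)$ (e.g., by indexing $Q'$ per location and scanning at most $\beta_{d_{2}}$ entries, bounded loosely by $|L|$). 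This makes the cost of one iteration of the outer loop $O(\beta_{d_{2}}\cdot m \cdot |L|)$. Multiplying by the $k$ iterations of the outer loop yields $O(k\cdot \beta_{d_{2}}\cdot |L|\cdot m)$, matching the statement.

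The main obstacle is pinning down precisely where the factor $|L|$ in the stated bound originates, since the "structural" work per outer iteration is only $O(\beta_{d_{2}}\cdot m)$. I expect the delicate point to be the analysis of the membership check and merging at lines $(14)$–$(18)$, which must be argued carefully because it depends on the chosen data structure for $Q'$. Once that is fixed, the rest of the argument is a routine combination of the three bounds above.
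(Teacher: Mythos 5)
Your proof is correct and follows essentially the same decomposition as the paper's: bound the number of while-loop iterations by $k$ via the $d_{\min}$ growth argument, bound $|Q|$ by $\beta_{d_2}\cdot|L|$ using the uniqueness of (location, distance) pairs, and multiply by the per-element work over the $m$ edges. Your accounting is in fact slightly more careful than the paper's --- the paper simply multiplies $|Q|\leq\beta_{d_2}\cdot|L|$ by the neighbour explorations without isolating where the extra $|L|$ factor arises, whereas you correctly observe that the raw edge explorations per iteration are only $O(\beta_{d_2}\cdot m)$ and that the remaining factor must be charged to the membership test and merge on $Q'$; this is a legitimate refinement rather than a different route.
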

\begin{proof}
First, we need to compute the upper bound on the number of iterations of the while loop starting at the line (4).
Let us denote by $Q_k$ the value of $Q$ after $k$ iterations. If $Q_k=\emptyset$, the loop stops after at most $k$ iterations.  $Q_k$ is empty if no elements are added at that iteration. 
An element $(\ell', v', d')$ is not added to $Q_k$ iff $d' \geq d_2$ where $d'=d +_{B} f(w)\geq d +_{B} d_{min} $  . Note that, at the first iteration $Q_0$, $d = \bot_B$. At each iteration, we add a value greater or equal to $d_{min}$, this means that after $k$ iterations  $d' \geq k*d_{min}$ but  $k*d_{min} > d_2$ for definition.  Q can have at most $\beta_{d_2}\cdot | L |$ elements and, at each iteration of the while loop, for each elements in Q, we consider their connected ones.
This implies that  function \Call{BoundedReach}{} terminates after $O(k\cdot \beta_{d_2}\cdot|L|\cdot m)$ steps. 
\end{proof}

\begin{proposition}[UnBoundedReach complexity]
\label{prop:reachunboundcomplexity}
Given an $A$-spatial model $(L,\wfun)$, a function $f: A\rightarrow B$, a value $d_1\in B$ ($d_1\not= \top_{B}$), and two spatial signals $\ssign_1: L\rightarrow D_2$, $\ssign_2:L\rightarrow D_2$ such that $\Call{UnboundedReach}{(L,\wfun),f,d_1,\ssign_1,\ssign_2}=\ssign$ and let $d_{min}=\min_{(\ell, w, \ell') \in \wfun}(f(w))$, $d_{max}=\max_{(\ell, w, \ell') \in \wfun}(f(w))$ and $k=\min\{ i | i*d_{min} > d_{1}+d_{max} \}$, then the algorithm stops after $O(k\cdot \beta_{d_2}\cdot|L|\cdot m)$ steps, where $m$ is the number edges and $\beta_{d_2}$ is an integer counting the \emph{different distances} accumulated after $k$ steps. Furthermore, when $d_1=\bot_{B}$, this complexity reduces to $O(|L|*m)$.
\end{proposition}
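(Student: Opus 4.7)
The plan is to split the analysis into the two phases that appear in Algorithm~\ref{algo:unboundedreachmonitoring}: the initialization phase (lines (2)--(6)), which either copies $\ssign_2$ or calls \textproc{BoundedReach}, and the back-propagation while-loop (lines (8)--(20)), which iterates to a fixpoint.

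First, I would bound the initialization. When $d_1=\bot_{B}$ the assignment $\ssign = \ssign_2$ costs $O(|L|)$. When $d_1\neq\bot_{B}$, function \textproc{BoundedReach} is called on the interval $[d_1, d_1 +_{B} d_{max}]$, and by Proposition~\ref{prop:reachboundcomplexity} its cost is $O(k\cdot\beta_{d_1+d_{max}}\cdot|L|\cdot m)$ with the very $k$ defined in the statement (note that the upper bound $d_1+d_{max}$ is exactly the argument passed to \textproc{BoundedReach}, so the $k$ and $\beta$ of that proposition instantiate to the ones in the current statement). The single precomputation of $d_{max}$ is $O(m)$ and thus absorbed.

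Second, I would bound the back-propagation loop. The key observation is that this loop is a Bellman--Ford-style fixpoint iteration in the idempotent semiring $D_2$: a location $\ell'$ is (re)enqueued in $T'$ only when its stored value strictly grows under $\sqsubseteq_{D_2}$ via the update $v' = (\ssign[\ell]\otimes_{D_2}\ssign_1[\ell'])\oplus_{D_2}\ssign[\ell']$. A straightforward induction on the iteration index $i$ shows that after $i$ rounds, $\ssign[\ell']$ incorporates the contribution of every route of hop-length at most $i$ from $\ell'$ to a location whose value was already fixed by the initialization. Since the combine operator $\otimes_{D_2}$ is idempotent (signal domains are idempotent semirings by definition), cycling through a location again cannot produce a new value, so only simple prefixes matter; hence the loop terminates after at most $|L|$ iterations. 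Each iteration scans, in the worst case, every edge once, giving $O(|L|\cdot m)$ for this phase.

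Summing the two phases yields $O(k\cdot\beta\cdot|L|\cdot m) + O(|L|\cdot m) = O(k\cdot\beta\cdot|L|\cdot m)$ in general, and $O(|L|) + O(|L|\cdot m) = O(|L|\cdot m)$ when $d_1=\bot_{B}$, as claimed. The step I expect to require the most care is the termination bound for the back-propagation loop: one has to combine monotonicity of the updates in the semiring order with idempotence of $\otimes_{D_2}$ to rule out non-simple routes, which is exactly what lets us cap the number of iterations by $|L|$ rather than by some quantity depending on the distance domain $B$.
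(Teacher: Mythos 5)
Your proof is correct and follows essentially the same decomposition as the paper's: charge the initialization to Proposition~\ref{prop:reachboundcomplexity} (instantiated with upper bound $d_1 +_{B} d_{max}$) and bound the back-propagation loop by $|L|$ rounds of $O(m)$ edge relaxations each. If anything you are more careful than the paper on two points: you actually justify the $|L|$-iteration cap via monotonicity and idempotence in the semiring rather than asserting it, and in the $d_1=\bot_{B}$ case you correctly account for the fact that the while-loop still runs (contributing the $O(|L|\cdot m)$ term), whereas the paper's proof mistakenly states that those lines are skipped while arriving at the same bound.
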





\begin{proof}
We have already observed in Proposition~\ref{prop:reachboundcomplexity} that the first part of Algorithm~\ref{algo:unboundedreachmonitoring} terminates after $O(k\cdot \beta_{d_2}\cdot|L|\cdot m)$. We can here observe that the second part of the algorithm (lines $(9)-(21)$) needs at most $|L|\cdot m$ steps. This is because the for-loop at lines $(12)-(18)$ consists of at most $O(m)$ steps (indeed each edge is considered at most two times). Moreover, a location can be inserted in $T$ at most $|L|$ times.
Concluding, the algorithm terminates after $O(k\cdot \beta_{d_2}\cdot|L|\cdot m)+O(|L|\cdot m)=O(k\cdot \beta_{d_2}\cdot|L|\cdot m)$ steps. When $d_1=\bot_{B}$  lines $(9)-(21)$ are not executed, then the algorithm 
terminates after $O(|L|\cdot m)$.
\end{proof}


\begin{proposition}[Escape complexity]
\label{prop:escapecomplexity}
Given an $A$-spatial model $(L,\wfun)$, a function $f: A\rightarrow B$, an interval $[d_1, d_2]$, ($d_1,d_2\in B$, $d_1\leq_{B} d_2$ and) $d_2\not=\top_{B}$), and a spatial signal $\ssign_1: L\rightarrow D_2$ such that $\Call{Escape}{(L,\wfun),f,d_1,d_2,\ssign_1}=\ssign$. Algorithm terminates in at most  $O(|L|\cdot m )$ steps, where $m$ is the number of edges.
\end{proposition}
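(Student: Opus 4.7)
My plan is to decompose the running time of $\mathsf{Escape}$ into three phases---the distance matrix precomputation (line 2), the fixpoint flooding (lines 3--21), and the final per-location aggregation (lines 22--24)---and bound each separately.

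First, I would dispatch the easy bookkeeping. The call $\Call{MinDistance}{L,\wfun,f}$ at line (2) can be realized by running a single-source shortest-path procedure from each node (BFS when $f$ is uniform, Dijkstra-style otherwise), for an overall $O(|L|\cdot m)$ cost, which absorbs the $O(|L|^2)$ initializations of the matrices $e$ and $D$. The final loop at lines (22)--(24) costs at most $O(|L|^2)$, since for each $\ell$ we scan one row of $e$ against the corresponding row of $D$; this is already $O(|L|\cdot m)$ as soon as $m \geq |L|$.

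The crucial part is the $\mathsf{while}$-loop at lines (6)--(21), and I would analyse it by an amortized argument charged to cell updates. The key structural observation (using that $\langle D_2,\oplus,\otimes,\bot,\top\rangle$ is an idempotent semiring ordered by $\sqsubseteq$, so $v = e[\ell_1',\ell_2] \oplus (\ssign_1(\ell_1')\otimes e[\ell_1,\ell_2])$ satisfies $v \sqsupseteq e[\ell_1',\ell_2]$) is that a pair $(\ell_1',\ell_2)$ is inserted into $T'$ at line (13) only when its $e$-value strictly increases in the lattice order. Hence, whenever the inner loop at lines (10)--(16) visits a pair $(\ell_1,\ell_2)\in T$, we can charge its cost $O(\deg(\ell_1))$ to the change of $e[\ell_1,\ell_2]$ that caused its insertion into $T$. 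Summing over all columns and all nodes bounds the total work by
\[
\sum_{\ell_2 \in L}\sum_{\ell_1 \in L} O(\deg(\ell_1)) \;=\; O\!\bigl(|L|\cdot \textstyle\sum_{\ell_1 \in L} \deg(\ell_1)\bigr) \;=\; O(|L|\cdot m),
\]
and combining the three phases yields the claimed $O(|L|\cdot m)$ bound.

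The main obstacle is making this amortization fully precise for an arbitrary idempotent $D_2$, because a cell $e[\ell_1,\ell_2]$ can in principle change its value more than once before reaching the fixpoint. I would close this gap by invoking Property $P3$ from the proof of Lemma~\ref{lemma:escapecorrectness}, which guarantees that the outer while-loop terminates after at most $|L|$ iterations, together with the observation that the cells reprocessed in iteration $i+1$ are exactly those that changed in iteration $i$. Each reprocessing event inspects only the incoming neighbours of a single node, so the per-event charge of $O(\deg(\ell_1))$ is correctly attributed and the aggregate work over all iterations telescopes into the sum above, independently of how many intermediate values a given cell passes through.
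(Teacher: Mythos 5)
Your three-phase decomposition matches the paper's, and your accounting of the first and third phases is in fact more conservative than the paper's (which charges only $O(m\log|L|)$ for the full all-pairs matrix $D$ and only $O(|L|)$ for the final aggregation over rows of $e$). The genuine problem is the amortization of the while-loop, and it sits exactly at the point you flag and then claim to have closed. Your charging scheme attributes a cost of $O(\deg(\ell_1))$ to each \emph{change event} of a cell $e[\ell_1,\ell_2]$, so the total work is $\sum_{(\ell_1,\ell_2)} c(\ell_1,\ell_2)\cdot O(\deg(\ell_1))$, where $c(\ell_1,\ell_2)$ is the number of times that cell strictly increases in the order $\sqsubseteq$. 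Your displayed sum $\sum_{\ell_2}\sum_{\ell_1}O(\deg(\ell_1)) = O(|L|\cdot m)$ is obtained only if $c(\ell_1,\ell_2)=O(1)$ for every cell; the total does \emph{not} telescope ``independently of how many intermediate values a given cell passes through''. Property $P3$ only bounds the number of outer iterations by $|L|$, hence gives $c(\ell_1,\ell_2)\leq |L|$, which yields $O(|L|^2\cdot m)$ rather than $O(|L|\cdot m)$. And $c=O(1)$ does not follow from idempotency alone: in the max/min domain $\mathbb{R}^{\infty}$ the update $e[\ell_1',\ell_2]\leftarrow \max(e[\ell_1',\ell_2],\min(\ssign_1(\ell_1'),e[\ell_1,\ell_2]))$ is a Bellman--Ford-style bottleneck computation in which a single cell can strictly improve once per iteration (longer routes can have better bottleneck values), i.e.\ $\Theta(|L|)$ times.

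In fairness, the paper's own proof glosses over the same point from the other direction: it asserts that in each iteration ``each edge is taken into account at most 2 times'', which holds only if $T$ contains at most one pair per first component; since $T$ ranges over pairs $(\ell_1,\ell_2)$ with up to $|L|$ distinct second components, an edge incident to $\ell_1$ may be examined up to $2|L|$ times in one iteration, so the honest per-iteration bound is $O(|L|\cdot m)$ and the iteration-count argument also lands at $O(|L|^2\cdot m)$. So your proof is not weaker than the paper's, and in the Boolean signal domain --- where every cell changes at most once --- your amortization gives a clean, correct derivation of $O(|L|\cdot m)$ that the paper's argument does not. But as stated for an arbitrary idempotent signal domain the last step does not follow: you must either restrict to domains where each cell stabilizes after a bounded number of changes, or settle for the weaker bound.
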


\begin{proof}
The computation of function $\Call{MinDistance}{L,\wfun,f}$ needs $O( m \log(|L|))$ steps.
Moreover, from property $P3$ in the proof of Lemma~\ref{lemma:escapecorrectness}, we have that the loop at the line (6) terminates after at most $|L|$ iterations. In each of these iterations, each edge is taken into account at most 2 times (one for each of the connected locations). This means that the loop terminates after $O(|L|*m)$ steps. 
Finally, to compute the resulting spatial signal, $O(|L|)$ steps are needed for the loop at line $(22)$.
Concluding, the algorithm terminates in $O( m \log(|L|))+O(|L|*m)+O(|L|)$, that is $O(|L|*m)$.
\end{proof}

We can conclude this section by observing that the number of steps needed to evaluate function $\Call{Monitor}{}$ in Algorithm~\ref{algo:part1} is linear in the size of $\phi$, in the length of the signal, and in the number of \emph{edges} in the spatial model and it is quadratic in the number of locations.

\section{Case study: ZigBee protocol monitoring}
\label{sec:ZigBee}
In this section, we consider the running example used in the previous sections. We discuss some properties to show the expressivity and potentiality of STREL.

Given a MANET with a ZigBee protocol (Example~\ref{ex:zigbee}), 
we consider as spatial models both its proximity and connectivity graphs computed with respect to the Cartesian coordinates (Example \ref{ex:manet}). Nodes have three kind of roles: {\it coordinator}, {\it router} and {\it EndDevice}, as described in Example \ref{ex:zigbee}. Moreover, each device is also equipped with a sensor to monitor its battery level ($b$), the humidity ($h$) and the pollution  ($p$) in its position.  
The semiring is the union between the  \emph{max/min} semiring $\mathbb{R}^{\infty}$ (for the proximity graph) and the  \emph{integer} semiring $\mathbb{N}^{\infty}$ (for the connectivity graph). We will use also two types of distances: ${\it hop}$ and the $\Delta$ distances described in Example~\ref{ex:distancefunction}.
Atomic propositions  $\{ \coord, \router, \device\}$ describe the type of nodes. We also consider inequalities on the values that are read from sensors, plus special propositions $@_\ell$ which encode the address of a specific location, i.e. they are true only in the location $\ell$. 

In the following, we describe several properties of these ZigBee MANET networks that are easily captured by STREL logic, to exemplify its expressive power.

A class of properties naturally encoded in STREL is related to the connectivity of the network. First, we can be interested to know if a node is properly connected, meaning that it can reach the coordinator through a path of routers:
\begin{equation}
\phi_{connect} = device \reach{[0,1]}{hop} (router \reach{ }{hop} coord )
\end{equation}
The meaning of this property is that one end node reaches in a step a node which is a router and that is connected to the coordinator via a path of routers.

We may also want to know if there is a path to the router which is reliable in terms of battery levels, for instance such that all routers have a battery level, $b$, above 50\%:  
\begin{eqnarray}
&\phi_{reliable\_router} = ((b > 0.5) \wedge router) \reach{ }{hop} coord &\nonumber \\
&\phi_{reliable\_connect} =   device \reach{[0,1] }{hop} (\phi_{reliable\_router}  )&
\end{eqnarray}

The properties focus on spatial connectivity at a fixed time. We can add also temporal requirements, for instance asking that a broken connection is restored within $h$ time units:

\begin{equation}
\phi_{connect\_restore} = \glob{} (\neg \phi_{connect} \rightarrow  \ev{[0,h]}\phi_{connect} )
\end{equation}

Another class of properties of interest is  the acyclicity of transmissions. To this end, we need to force the connectivity graph to be directed, with edges pointing in the direction of the coordinator (i.e. transmission reduces the distance from the coordinator).  With STREL, we can easily detect the  absence of a cycle for a fixed location $\ell$. This is captured by 
$\phi^{\ell}_{acyclic} = \neg \phi^{\ell}_{cycle}$, where 
\begin{equation}
\phi^{\ell}_{cycle} =  @_\ell \reach{[0,1]}{hop}  (\neg  @_\ell  \wedge \somewhere{}{hop}@_\ell)
\end{equation}
In order to characterize the whole network as acyclic, we need to take the conjunction of the previous formulae for all locations (or at least for routers, enforcing end devices to be connected only with routers). This is necessary as STREL is interpreted locally, on each location, and this forbids us to express properties of the whole network with location unaware formulae. This is a price for efficient monitoring, as global properties of networks require more expressive and computationally expensive logic. 
However, we can use the parametrization of STREL and the property of a Voronoi diagram to specify the global connection or the acyclicity of the graph. Indeed, the proximity graph connects always all the locations of the system, then the property $\everywhere{}{\Delta} \phi$, verified on the proximity graph, holds iff $\phi$ holds in all the locations of the system.
 
Up to now, we have presented qualitative properties, depending on the type of node. If we express properties of sensor measurements, we can also consider quantitative semantics, returning a measure of the robustness of (dis)satisfaction. As an example, we  can monitor \eqref{eq:f1} if in each location a high value of pollution eventually implies,  within $T$ time units, an high value of humidity, or \eqref{eq:f2} in which locations it is possible to find a `safe' route, where both the humidity and the pollution are below a certain threshold. We can also check \eqref{eq:f3} if a location is at least at distance at most $d$ from a location that is safe. 
\begin{eqnarray}
&\phi_{PH} = (p > 150) \Rightarrow \ev{[0,T]} (h > 100) \label{eq:f1} &\\
&\phi_{Safe} =\glob{[0,T]} \escape{[d, \infty]}{\Delta} \: {(h < 90) \wedge (p < 150) } \label{eq:f2} &\\
&\phi_{some} = \somewhere{[0,d]}{\Delta}  \phi_{Safe} \label{eq:f3} &
\end{eqnarray}


\section{Case study: epidemic spreading}
\label{sec:epidemic}
In this section, we investigate a case study based on an epidemic spreading model in a population of a disease-transmitting via direct contact, like flu or COVID19. The simplest models of epidemic spreading are based on a mean-field assumption and treat the population as a homogeneous entity, assuming equal probability that two individuals can enter into contact \cite{epidemiology2019}. A more accurate description, instead,  models potential contacts in a more explicit way, hence the population is represented as a network of interacting agents \cite{network_epidemic2015}, in which nodes are the agents and links are the potential contacts. Such a network can be static (links are fixed) or dynamic (links change during the simulation) and possibly adaptive \cite{network_epidemic2015}.
These kinds of models are particularly useful when dealing with scenarios in which the number of potentially infective contacts, and thus of secondary infections, can vary a lot between individuals, the so-called super-spreading scenario \cite{superspreading_2005}, which seems to be the relevant one also to capture the spreading of COVID19 disease \cite{superspreading_COVID_2020}.

In our case study, we consider a discrete-time model composed of two contact networks, one static, describing contacts within  work colleagues, family, closed relatives, and friends, and another one dynamic, modeling less frequent interaction events, like going to the restaurant, the cinema, or the disco.
The static network is defined by a degree distribution, assumed to be the same for each node, and modeled as a lognormal distribution with cutoff (mean 10, 99 percentile 50, cut off at 200).\footnote{Contact distributions are constructed to resemble contact data collected by a regional government in Italy, which is not publicly available.} To generate the network, we sample a degree for each node and then sample the network relying on the \textsc{expected\_degree\_graph} method of networkX Python library \cite{networkX}.
This network is sampled once and not modified during simulations.
The dynamic event network, instead, is resampled at every simulation step (essentially corresponding to a day). Here, we additionally choose a subset of nodes that will be affected by the events. Each node has assigned a random probability of taking part in the event (chosen uniformly among the following frequency: once every month, once every two weeks, once every week, twice per week) and at each step, the node is included in the event network with such a probability. Then, each active node receives a degree sampled from a different degree distribution with a longer tail (lognormal with
mean 10 and 99 percentile 100, cut off at 1000), to model super-spreading effects.\footnote{Note that, as we rely on distributions with cut-off, there is no practical difference in using a lognormal or a heavier tail distribution like a power law.}

In order to simulate our discrete-time model, with step corresponding to one day, we further give each agent one of four different states (\textbf{S}usceptible, \textbf{E}xposed but not infective, \textbf{I}nfective, \textbf{R}ecovered), and sample the duration in days of the transitions from E to I and from I to R according to a gamma distribution taken from COVID19 literature \cite{merler_2020}. Infection of a Susceptible node can happen if it is in contact with an Infective node, with a probability $p$ which is edge dependent and sampled for each edge according to a Beta distribution with a mean 0.05 (which roughly gives an R0 close to 2, as observed in the second phase of the COVID epidemics, in Lombardia, Italy). We assume independence among different edges when modeling  the spreading of infection.

At each time step, the spatial model of the epidemic network is designed by the pair  $\langle L, \wfun\rangle$, where the set of locations $L$ corresponds to the set of agents and the proximity function $\wfun$ is such that  $(\ell_i,w,\ell_j)\in \wfun$ if and only there is a probability greater than zero that $\ell_i$ and $\ell_j$
are in contact. 
The value of the weight $w$ corresponds to the sampled probability described above, both for the static and the dynamic one. 
More specifically, $w=-\ln(p_{\ell_i,\ell_j}(t))$,  where $p_{\ell_i,\ell_j}(t)$ is the infection probability at time t. 
Hence, the higher is $w$, the lower is the probability that agent $\ell_i$ is infected by agent $\ell_j$.  We consider two types of distances here, the $hop$ distance, counting the number of hops, and the $weight$ distance, summing the value of edges $w$.

The spatio-temporal trace of our epidemic network is $x: L \rightarrow \mathbb{T} \rightarrow \mathbb{Z}$ with only one signal $\vec x(\ell)= \tsign$ associating with each agent $\ell$ and at each time t, the state $x(\ell, t) \in \mathbb{S} = \{\textbf{Susceptible}, \textbf{Exposed}, \textbf{Infected}, \textbf{Recovered} \}=\{\textbf{S}, \textbf{E}, \textbf{I}, \textbf{R} \}$.
To give an idea of the behavior of this model we plot in Figure~\ref{fig:simulation} the number of nodes in each state at each time for a random simulation.

\begin{figure}[H]
    \centering
    \includegraphics[scale=0.6]{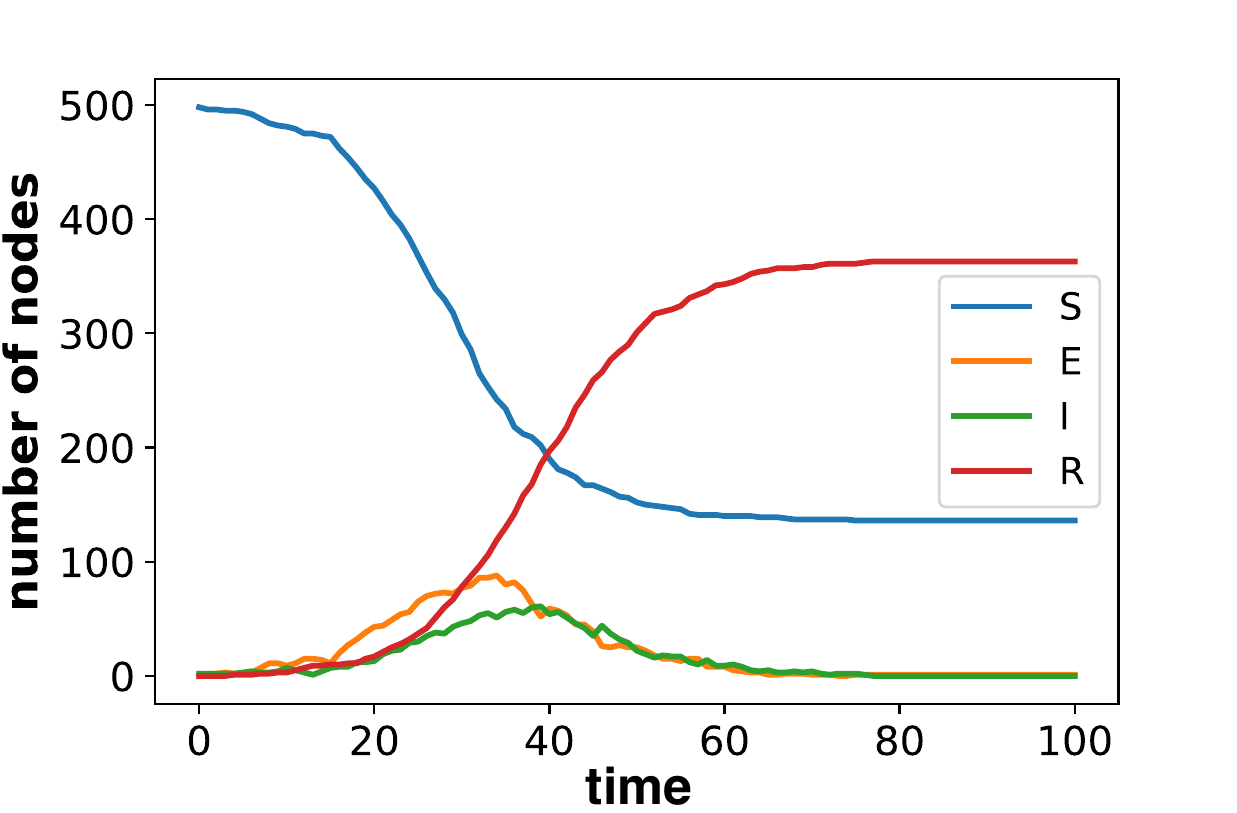}
    \caption{The number of nodes in each state at each time for a random simulation.}
    \label{fig:simulation}
\end{figure}

The first property that we consider is:
\begin{equation}
\phi_{dangerous\_days} = \glob{} (\textbf{Susceptible} \reach{[0,1]}{hop} (\ev{[0,2]}(\textbf{Infected})) => \ev{[0,7]} \textbf{Infected} )
\end{equation}

$\phi_{dangerous\_days}$ is satisfied in a location when it is always true (at each time step) that if a susceptible individual is directly connected with an individual that will be eventually infected in the next 2 days then it will eventually be infected within the next $7$ days.  If we consider only the static network, this property is satisfied on $447\pm12.5$ nodes of the network, considering 500 experiments, instead, considering only the dynamic network the property is satisfied by $350\pm70.5$ nodes. As expected the daily contacts are more dangerous than casual contact, and the dynamic network has more variability than the static one.

The second property that we consider is:
\begin{equation}
\phi_{safe} = \glob{} ( \everywhere{[0,r]}{weight}  (\neg \textbf{Infected}) => \glob{[0,T]}  (\neg \textbf{Infected})  )
\end{equation}

$\phi_{safe}$ holds in a location if it is always true that, when all the connected locations at a weight distance less than $r$ (i.e. with infection probability $\leq 10^{-r}$) are not infected, implies that this location will remain not infected for the next T days. With this property, we can study the relationship between the probability of being infected from connected nodes and being actually an infected individual after a number of days. If a location satisfies the property, it means that being in a radius $r$ of not infected individuals prevents infection.
If a location does not satisfy the property it means that there is some infected node at a distance of more than $r$, connected with it that causes its infection within the next T days. Setting $T=7$ days, we study the variation of $r$ versus the number of nodes that satisfy the property (in a network with 500 nodes). Figure~\ref{fig:safe_radius} shows the results. We report also a second scale with the corresponding probability value. We can see that
with $r=3$ which corresponds to a connection probability equal to $0.05$ (the mean of our Beta distribution), only half of the nodes satisfy the property, and to have almost all nodes that satisfy the property we need to consider a very large radius. This means that having in the network edges with very large values of $r$ will not prevent the spread of the disease.
\begin{figure}[H]
    \centering
    \includegraphics[scale=0.65]{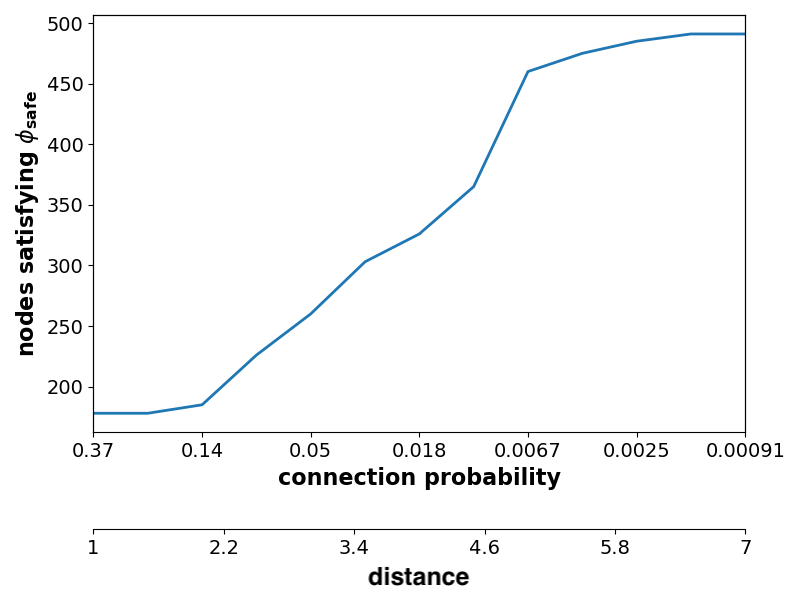}
    \caption{Number of nodes that satisfy property $\phi_{safe}$ versus parameter r.}
    \label{fig:safe_radius}
\end{figure}

\section{Conclusions}
\label{sec:conclusion}

We presented STREL, a formal specification language to express and to monitor spatio-temporal requirements over a dynamic network of spatially-distributed CPS.  Our monitoring framework considers the CPS topological configuration as a weighted graph with the nodes modeling the CPS entities while the edges representing their arrangement according to predefined spatial relations (e.g. proximity relation, connectivity, Euclidean distance, etc.). Both nodes and edges contain attributes that model physical and logical quantities that can change over time. 
STREL combines the Signal Temporal Logic~\cite{Maler2004} with two spatial operators \emph{reach} and \emph{escape} that are interpreted over 
the weighted graph. Other spatial modalities such as \emph{everywhere}, \emph{somewhere} and \emph{surround} can also be derived from them.
We demonstrated how STREL can be interpreted according to different semantics (Boolean, real-valued), defining a unified framework capturing all of them,  based on constraint semirings.  We  provided a generic offline monitoring algorithm based on such semiring formulation, providing also correctness  proofs  and  discussing  in  detail  its  algorithmic complexity.  We showed several examples of  requirements that we can monitor in our framework, considering two different case studies. 

As future works,  first, we aim to design a distributed and \emph{online} monitoring procedure where the spatio-temporal signal is not known at the beginning, and it is discovered while data are collected from the system; some preliminary results on this line can be found~\cite{strelonline}. Second, we aim to extend our framework with new features such as the possibility to synthesize automatically spatio-temporal controllers from the STREL specification or to provide automatically an explanation of the failure, enabling us to detect the responsible components when a STREL requirement is violated.

\section*{Acknowledgment}
This research has been partially supported by the Austrian FWF projects ZK-35 and W1255-N23, by the Italian PRIN project ``SEDUCE'' n.~2017TWRCNB, by the Italian PRIN project ``IT-MaTTerS'' n.~2017FTXR7S, and by POR MARCHE FESR 2014-2020, project ``MIRACLE'', CUP B28I19000330007.

\bibliographystyle{alphaurl}
\bibliography{biblio}

\newcommand{\etalchar}[1]{$^{#1}$}
\begin{thebibliography}{PSCVMV15}

\bibitem[ACM02]{AsarinCM02}
Eugene Asarin, Paul Caspi, and Oded Maler.
\newblock Timed regular expressions.
\newblock {\em J. {ACM}}, 49(2):172--206, 2002.
\newblock \href {https://doi.org/10.1145/506147.506151}
  {\path{doi:10.1145/506147.506151}}.

\bibitem[AGBB14]{bartocci2014}
Ebru Aydin-Gol, Ezio Bartocci, and Calin Belta.
\newblock A formal methods approach to pattern synthesis in reaction diffusion
  systems.
\newblock In {\em Proc. of {CDC} 2014: the 53rd {IEEE} Conference on Decision
  and Control}, pages 108--113, Los Angeles, CA, USA, 2014. {IEEE}.
\newblock \href {https://doi.org/10.1109/CDC.2014.7039367}
  {\path{doi:10.1109/CDC.2014.7039367}}.

\bibitem[AMBB21]{strelnewsemantics}
Suhail Alsalehi, Noushin Mehdipour, Ezio Bartocci, and Calin Belta.
\newblock Neural network-based control for multi-agent systems from
  spatio-temporal specifications.
\newblock In {\em Proc. of {CDC} 2021: the 60th Conference on Decision and
  Control}, page to appear, 2021.

\bibitem[APHvB07]{handbookSP}
Marco Aiello, Ian Pratt-Hartmann, and Johan van Benthem, editors.
\newblock {\em Handbook of Spatial Logics}.
\newblock Springer, Dordrecht, Niederlande, 2007.
\newblock \href {https://doi.org/10.1007/978-1-4020-5587-4}
  {\path{doi:10.1007/978-1-4020-5587-4}}.

\bibitem[Aur91]{Aurenhammer1991}
Franz Aurenhammer.
\newblock Voronoi diagrams\&mdash;a survey of a fundamental geometric data
  structure.
\newblock {\em ACM Comput. Surv.}, 23(3):345--405, 1991.
\newblock \href {https://doi.org/10.1145/116873.116880}
  {\path{doi:10.1145/116873.116880}}.

\bibitem[BAGHB16]{Bartocci2016TNCS}
Ezio Bartocci, Ebru Aydin-Gol, Iman Haghighi, and Calin Belta.
\newblock A formal methods approach to pattern recognition and synthesis in
  reaction diffusion networks.
\newblock {\em IEEE Transactions on Control of Network Systems}, 5(1):1--12,
  2016.
\newblock \href {https://doi.org/10.1109/TCNS.2016.2609138}
  {\path{doi:10.1109/TCNS.2016.2609138}}.

\bibitem[BBC{\etalchar{+}}20]{BuonamiciBCLM20}
Fabrizio~Banci Buonamici, Gina Belmonte, Vincenzo Ciancia, Diego Latella, and
  Mieke Massink.
\newblock Spatial logics and model checking for medical imaging.
\newblock {\em Int. J. Softw. Tools Technol. Transf.}, 22(2):195--217, 2020.
\newblock \href {https://doi.org/10.1007/s10009-019-00511-9}
  {\path{doi:10.1007/s10009-019-00511-9}}.

\bibitem[BBL{\etalchar{+}}20]{BartocciBLNS20}
Ezio Bartocci, Luca Bortolussi, Michele Loreti, Laura Nenzi, and Simone
  Silvetti.
\newblock Moon{L}ight: {A} lightweight tool for monitoring spatio-temporal
  properties.
\newblock In {\em Proc. of {RV} 2020: the 20th International Conference on
  Runtime Verification}, volume 12399 of {\em LNCS}, pages 417--428. Springer,
  2020.
\newblock \href {https://doi.org/10.1007/978-3-030-60508-7}
  {\path{doi:10.1007/978-3-030-60508-7}}.

\bibitem[BBLN17]{Bartocci17memocode}
Ezio Bartocci, Luca Bortolussi, Michele Loreti, and Laura Nenzi.
\newblock Monitoring mobile and spatially distributed cyber-physical systems.
\newblock In {\em Proc. of {MEMOCODE} 2017: the 15th {ACM-IEEE} International
  Conference on Formal Methods and Models for System Design}, pages 146--155,
  Vienna, Austria, 2017. {ACM}.
\newblock \href {https://doi.org/10.1145/3127041.3127050}
  {\path{doi:10.1145/3127041.3127050}}.

\bibitem[BBM{\etalchar{+}}15]{BartocciBMNS15}
Ezio Bartocci, Luca Bortolussi, Dimitrios Milios, Laura Nenzi, and Guido
  Sanguinetti.
\newblock Studying emergent behaviours in morphogenesis using signal
  spatio-temporal logic.
\newblock In {\em Proc. of {HSB} 2015: the Fourth International Workshop on
  Hybrid Systems Biology}, volume 9271 of {\em LNCS}, pages 156--172. Springer,
  2015.
\newblock \href {https://doi.org/10.1007/978-3-319-26916-0\_9}
  {\path{doi:10.1007/978-3-319-26916-0\_9}}.

\bibitem[BCCF19]{epidemiology2019}
Fred Brauer, Carlos Castillo-Chavez, and Zhilan Feng.
\newblock {\em Mathematical {Models} in {Epidemiology}}, volume~69 of {\em
  Texts in {Applied} {Mathematics}}.
\newblock Springer New York, New York, NY, 2019.
\newblock URL: \url{http://link.springer.com/10.1007/978-1-4939-9828-9}, \href
  {https://doi.org/10.1007/978-1-4939-9828-9}
  {\path{doi:10.1007/978-1-4939-9828-9}}.

\bibitem[BCLM19]{BelmonteCLM19}
Gina Belmonte, Vincenzo Ciancia, Diego Latella, and Mieke Massink.
\newblock Voxlogica: {A} spatial model checker for declarative image analysis.
\newblock In {\em Proc. of {TACAS} 2019: the 25th International Conference on
  Tools and Algorithms for the Construction and Analysis of Systems}, volume
  11427 of {\em LNCS}, pages 281--298. Springer, 2019.
\newblock \href {https://doi.org/10.1007/978-3-030-17462-0\_16}
  {\path{doi:10.1007/978-3-030-17462-0\_16}}.

\bibitem[BDG{\etalchar{+}}20]{BartocciDGMNQ20}
Ezio Bartocci, Jyotirmoy Deshmukh, Felix Gigler, Cristinel Mateis, Dejan
  Nickovic, and Xin Qin.
\newblock Mining shape expressions from positive examples.
\newblock {\em {IEEE} Trans. Comput. Aided Des. Integr. Circuits Syst.},
  39(11):3809--3820, 2020.
\newblock \href {https://doi.org/10.1109/TCAD.2020.3012240}
  {\path{doi:10.1109/TCAD.2020.3012240}}.

\bibitem[BMR97]{BMR97}
Stefano Bistarelli, Ugo Montanari, and Francesca Rossi.
\newblock Semiring-based constraint satisfaction and optimization.
\newblock {\em J. {ACM}}, 44(2):201--236, 1997.
\newblock URL: \url{http://doi.acm.org/10.1145/256303.256306}, \href
  {https://doi.org/10.1145/256303.256306} {\path{doi:10.1145/256303.256306}}.

\bibitem[CGG{\etalchar{+}}18]{CianciaGGLLM15}
Vincenzo Ciancia, Stephen Gilmore, Gianluca Grilletti, Diego Latella, Michele
  Loreti, and Mieke Massink.
\newblock Spatio-temporal model checking of vehicular movement in public
  transport systems.
\newblock {\em Int. J. Softw. Tools Technol. Transf.}, 20(3):289--311, 2018.
\newblock \href {https://doi.org/10.1007/s10009-018-0483-8}
  {\path{doi:10.1007/s10009-018-0483-8}}.

\bibitem[CLLM14]{ciancia2014}
Vincenzo Ciancia, Diego Latella, Michele Loreti, and Mieke Massink.
\newblock Specifying and verifying properties of space.
\newblock In {\em Proc. of {TCS} 2014: the 8th {IFIP} {TC} 1/WG 2.2
  International Conference on Theoretical Computer Science}, volume 8705 of
  {\em Lecture Notes in Computer Science}, pages 222--235, Rome, Italy, 2014.
  Springer.
\newblock \href {https://doi.org/10.1007/978-3-662-44602-7}
  {\path{doi:10.1007/978-3-662-44602-7}}.

\bibitem[CLLM16]{CianciaLLM16}
Vincenzo Ciancia, Diego Latella, Michele Loreti, and Mieke Massink.
\newblock Spatial logic and spatial model checking for closure spaces.
\newblock In {\em Proc. of {SFM} 2016}, volume 9700 of {\em LNCS}, pages
  156--201, Bertinoro, Italy, 2016. Springer.
\newblock \href {https://doi.org/10.1007/978-3-319-34096-8\_6}
  {\path{doi:10.1007/978-3-319-34096-8\_6}}.

\bibitem[CTR{\etalchar{+}}20]{merler_2020}
D.~Cereda, M.~Tirani, F.~Rovida, V.~Demicheli, M.~Ajelli, P.~Poletti,
  F.~Trentini, G.~Guzzetta, V.~Marziano, A.~Barone, M.~Magoni, S.~Deandrea,
  G.~Diurno, M.~Lombardo, M.~Faccini, A.~Pan, R.~Bruno, E.~Pariani,
  G.~Grasselli, A.~Piatti, M.~Gramegna, F.~Baldanti, A.~Melegaro, and
  S.~Merler.
\newblock The early phase of the {COVID}-19 outbreak in {Lombardy}, {Italy}.
\newblock {\em arXiv:2003.09320 [q-bio]}, March 2020.
\newblock arXiv: 2003.09320.
\newblock URL: \url{http://arxiv.org/abs/2003.09320}.

\bibitem[Del34]{Delaunay1934}
Boris Delaunay.
\newblock Sur la sph\`ere vide. a la m\'emoire de georges vorono\"\i.
\newblock {\em Bulletin de l'Acad\'emie des Sciences de l'URSS}, 6:793--800,
  1934.

\bibitem[DFM13]{Donze2013}
Alexandre Donz{\'{e}}, Thomas Ferr{\`{e}}re, and Oded Maler.
\newblock Efficient robust monitoring for {STL}.
\newblock In {\em Proc. of {CAV} 2013: the 25th International Conference on
  Computer Aided Verification}, volume 8044 of {\em LNCS}, pages 264--279,
  Saint Petersburg, Russia, 2013. Springer.
\newblock \href {https://doi.org/10.1007/978-3-642-39799-8\_19}
  {\path{doi:10.1007/978-3-642-39799-8\_19}}.

\bibitem[EH83]{EmersonH83}
E.~Allen Emerson and Joseph~Y. Halpern.
\newblock "sometimes" and "not never" revisited: On branching versus linear
  time.
\newblock In John~R. Wright, Larry Landweber, Alan~J. Demers, and Tim
  Teitelbaum, editors, {\em Proc. of POPL 83: the Tenth Annual {ACM} Symposium
  on Principles of Programming Languages}, pages 127--140, Austin, Texas, 1983.
  {ACM} Press.
\newblock \href {https://doi.org/10.1145/567067.567081}
  {\path{doi:10.1145/567067.567081}}.

\bibitem[FB74]{FinkelB74}
Raphael~A. Finkel and Jon~Louis Bentley.
\newblock Quad trees: {A} data structure for retrieval on composite keys.
\newblock {\em Acta Informatica}, 4:1--9, 1974.
\newblock \href {https://doi.org/10.1007/BF00288933}
  {\path{doi:10.1007/BF00288933}}.

\bibitem[Gal99]{Gal99}
Antony Galton.
\newblock The mereotopology of discrete space.
\newblock In Christian Freksa and DavidM. Mark, editors, {\em Proc. of {COSIT}
  '99: the International Conference on Spatial Information Theory: Cognitive
  and Computational Foundations of Geographic Information Science}, volume 1661
  of {\em LNCS}, pages 251--266, Berlin Heidelberg, 1999. Springer.
\newblock \href {https://doi.org/10.1007/3-540-48384-5\_17}
  {\path{doi:10.1007/3-540-48384-5\_17}}.

\bibitem[HJK{\etalchar{+}}15]{spatel}
Iman Haghighi, Austin Jones, Zhaodan Kong, Ezio Bartocci, Radu Grosu, and Calin
  Belta.
\newblock {S}pa{T}e{L}: a novel spatial-temporal logic and its applications to
  networked systems.
\newblock In {\em Proc. of HSCC'15: the 18th International Conference on Hybrid
  Systems: Computation and Control}, pages 189--198, Seattle, WA, USA, 2015.
  IEEE.
\newblock \href {https://doi.org/10.1145/2728606.2728633}
  {\path{doi:10.1145/2728606.2728633}}.

\bibitem[HSS08]{networkX}
Aric~A. Hagberg, Daniel~A. Schult, and Pieter~J. Swart.
\newblock Exploring network structure, dynamics, and function using networkx.
\newblock In Ga\"el Varoquaux, Travis Vaught, and Jarrod Millman, editors, {\em
  Proceedings of the 7th Python in Science Conference}, pages 11 -- 15,
  Pasadena, CA USA, 2008.

\bibitem[JBGN18]{JaksicBGN18}
Stefan Jaksic, Ezio Bartocci, Radu Grosu, and Dejan Nickovic.
\newblock An algebraic framework for runtime verification.
\newblock {\em {IEEE} Trans. Comput. Aided Des. Integr. Circuits Syst.},
  37(11):2233--2243, 2018.
\newblock \href {https://doi.org/10.1109/TCAD.2018.2858460}
  {\path{doi:10.1109/TCAD.2018.2858460}}.

\bibitem[LM05]{LM05}
Alberto Lluch{-}Lafuente and Ugo Montanari.
\newblock Quantitative mu-calculus and {CTL} defined over constraint semirings.
\newblock {\em Theor. Comput. Sci.}, 346(1):135--160, 2005.
\newblock URL: \url{http://dx.doi.org/10.1016/j.tcs.2005.08.006}, \href
  {https://doi.org/10.1016/j.tcs.2005.08.006}
  {\path{doi:10.1016/j.tcs.2005.08.006}}.

\bibitem[LSSKG05]{superspreading_2005}
J.~O. Lloyd-Smith, S.~J. Schreiber, P.~E. Kopp, and W.~M. Getz.
\newblock Superspreading and the effect of individual variation on disease
  emergence.
\newblock {\em Nature}, 438(7066):355--359, November 2005.
\newblock Number: 7066 Publisher: Nature Publishing Group.
\newblock URL: \url{https://www.nature.com/articles/nature04153}, \href
  {https://doi.org/10.1038/nature04153} {\path{doi:10.1038/nature04153}}.

\bibitem[LWD{\etalchar{+}}20]{superspreading_COVID_2020}
Ramanan Laxminarayan, Brian Wahl, Shankar~Reddy Dudala, K~Gopal, Chandra Mohan,
  S~Neelima, K.~S.~Jawahar Reddy, J~Radhakrishnan, and Joseph Lewnard.
\newblock Epidemiology and transmission dynamics of {COVID}-19 in two {Indian}
  states.
\newblock preprint, Epidemiology, July 2020.
\newblock URL: \url{http://medrxiv.org/lookup/doi/10.1101/2020.07.14.20153643},
  \href {https://doi.org/10.1101/2020.07.14.20153643}
  {\path{doi:10.1101/2020.07.14.20153643}}.

\bibitem[Mal16]{maler2016runtime}
Oded Maler.
\newblock Some thoughts on runtime verification.
\newblock In {\em Proc. of RV 2016: the 16th International Conference on
  Runtime Verification}, volume 10012 of {\em LNCS}, pages 3--14. Springer,
  2016.
\newblock \href {https://doi.org/10.1007/978-3-319-46982-9_1}
  {\path{doi:10.1007/978-3-319-46982-9_1}}.

\bibitem[MBL{\etalchar{+}}20]{MaBLS020}
Meiyi Ma, Ezio Bartocci, Eli Lifland, John~A. Stankovic, and Lu~Feng.
\newblock Sa{S}{T}{L}: Spatial aggregation signal temporal logic for runtime
  monitoring in smart cities.
\newblock In {\em Proc. of {ICCPS} 2020: the 11th {ACM/IEEE} International
  Conference on Cyber-Physical Systems}, pages 51--62, Sydney, Australia, 2020.
  {IEEE}.
\newblock \href {https://doi.org/10.1109/ICCPS48487.2020.00013}
  {\path{doi:10.1109/ICCPS48487.2020.00013}}.

\bibitem[MBL{\etalchar{+}}21]{sastl2021}
Meiyi Ma, Ezio Bartocci, Eli Lifland, John~A. Stankovic, and Lu~Feng.
\newblock A novel spatial-temporal specification-based monitoring system for
  smart cities.
\newblock {\em IEEE Internet of Things Journal}, 8(15):11793--11806, 2021.
\newblock \href {https://doi.org/10.1109/JIOT.2021.3069943}
  {\path{doi:10.1109/JIOT.2021.3069943}}.

\bibitem[MDN21]{MohammadinejadD21}
Sara Mohammadinejad, Jyotirmoy~V. Deshmukh, and Laura Nenzi.
\newblock Mining interpretable spatio-temporal logic properties for spatially
  distributed systems.
\newblock In {\em Proc. of {ATVA} 2021: the 19th International Symposium on
  Automated Technology for Verification and Analysis}, volume 12971 of {\em
  Lecture Notes in Computer Science}, pages 91--107. Springer, 2021.
\newblock \href {https://doi.org/10.1007/978-3-030-88885-5}
  {\path{doi:10.1007/978-3-030-88885-5}}.

\bibitem[MN04]{Maler2004}
Oded Maler and Dejan Nickovic.
\newblock Monitoring temporal properties of continuous signals.
\newblock In {\em Proc. of FORMATS/FTRTFT 2004: the Joint International
  Conferences on Formal Modelling and Analysis of Timed Systems and on Formal
  Techniques in Real-Time and Fault-Tolerant Systems}, volume 3253 of {\em
  LNCS}, pages 152--166, Grenoble, France, 2004. Springer.
\newblock \href {https://doi.org/10.1007/978-3-540-30206-3\_12}
  {\path{doi:10.1007/978-3-540-30206-3\_12}}.

\bibitem[MN13]{MalerN13}
Oded Maler and Dejan Nickovic.
\newblock Monitoring properties of analog and mixed-signal circuits.
\newblock {\em Int. J. Softw. Tools Technol. Transf.}, 15(3):247--268, 2013.
\newblock \href {https://doi.org/10.1007/s10009-012-0247-9}
  {\path{doi:10.1007/s10009-012-0247-9}}.

\bibitem[NBB{\etalchar{+}}20]{NenziBBLV20}
Laura Nenzi, Ezio Bartocci, Luca Bortolussi, Michele Loreti, and Ennio
  Visconti.
\newblock Monitoring spatio-temporal properties (invited tutorial).
\newblock In {\em Proc. of {RV} 2020: the 20th International Conference on
  Runtime Verification}, volume 12399 of {\em LNCS}, pages 21--46. Springer,
  2020.
\newblock \href {https://doi.org/10.1007/978-3-030-60508-7\_2}
  {\path{doi:10.1007/978-3-030-60508-7\_2}}.

\bibitem[NBC{\etalchar{+}}15]{NenziBCLM15}
Laura Nenzi, Luca Bortolussi, Vincenzo Ciancia, Michele Loreti, and Mieke
  Massink.
\newblock Qualitative and quantitative monitoring of spatio-temporal
  properties.
\newblock In {\em Proc. of {RV} 2015: the 6th International Conference on
  Runtime Verification}, volume 9333, pages 21--37, Vienna, Austria, 2015.
  Springer.
\newblock \href {https://doi.org/10.1007/978-3-319-23820-3\_2}
  {\path{doi:10.1007/978-3-319-23820-3\_2}}.

\bibitem[NBC{\etalchar{+}}18]{NenziBCLM18}
Laura Nenzi, Luca Bortolussi, Vincenzo Ciancia, Michele Loreti, and Mieke
  Massink.
\newblock Qualitative and quantitative monitoring of spatio-temporal properties
  with {SSTL}.
\newblock {\em Log. Methods Comput. Sci.}, 14(4):1--38, 2018.
\newblock \href {https://doi.org/10.23638/LMCS-14(4:2)2018}
  {\path{doi:10.23638/LMCS-14(4:2)2018}}.

\bibitem[PSCVMV15]{network_epidemic2015}
Romualdo Pastor-Satorras, Claudio Castellano, Piet Van~Mieghem, and Alessandro
  Vespignani.
\newblock Epidemic processes in complex networks.
\newblock {\em Reviews of Modern Physics}, 87(3):925--979, August 2015.
\newblock arXiv: 1408.2701.
\newblock URL: \url{http://arxiv.org/abs/1408.2701}, \href
  {https://doi.org/10.1103/RevModPhys.87.925}
  {\path{doi:10.1103/RevModPhys.87.925}}.

\bibitem[RKG{\etalchar{+}}19]{RatasichKGGSB19}
Denise Ratasich, Faiq Khalid, Florian Geissler, Radu Grosu, Muhammad Shafique,
  and Ezio Bartocci.
\newblock A roadmap toward the resilient internet of things for cyber-physical
  systems.
\newblock {\em {IEEE} Access}, 7:13260--13283, 2019.
\newblock \href {https://doi.org/10.1109/ACCESS.2019.2891969}
  {\path{doi:10.1109/ACCESS.2019.2891969}}.

\bibitem[VBLN21]{strelonline}
Ennio Visconti, Ezio Bartocci, Michele Loreti, and Laura Nenzi.
\newblock Online monitoring of spatio-temporal properties for imprecise
  signals.
\newblock In {\em Proc. of {MEMOCODE} 2021: the 19th ACM-IEEE International
  Conference on Formal Methods and Models for System Design}, page to appear,
  2021.

\end{thebibliography}

\end{document}